\let\terms\undefined
\crefname{appendix}{Appendix}{Appendix}
\crefname{section}{Sec.}{Sec.}
\crefname{theorem}{Thm.}{Thm.}
\theoremstyle{definition}
\preto\align{\par\small\noindent}
\preto\equation{\par\small\noindent}
\newcommand{\nats}{\ensuremath{\mathbb{N}}}
\newcommand{\signal}{\ensuremath{s}}
\newcommand{\from}{\ensuremath{\colon}}
\newcommand{\fspace}[2]{\ensuremath{#2^{[#1]}}}
\newcommand{\dtime}{\ensuremath{\nats}}
\renewcommand{\to}{\ensuremath{\rightarrow}}
\newcommand{\values}{\ensuremath{\mathcal{V}}}
\newcommand{\inputs}{\ensuremath{\mathcal{I}}}
\newcommand{\outputs}{\ensuremath{\mathcal{O}}}
\newcommand{\cells}{\ensuremath{\mathbb{C}}\xspace}
\newcommand{\bool}{\ensuremath{\mathcal{B}}}
\newcommand{\cfm}{\ensuremath{\mathcal{M}}\xspace}
\newcommand{\branch}[2]{\ensuremath{#1 \! \wr \hspace{-0.6pt} #2}}
\newcommand{\term}{\ensuremath{\tau}}
\newcommand{\fterm}{\ensuremath{\term_{F}}\xspace}
\newcommand{\pterm}{\ensuremath{\term_{P}}\xspace}
\newcommand{\name}[1]{{\text{\texttt{#1}}}}
\newcommand{\const}[1]{\ensuremath{\text{\texttt{#1}}()}}
\newcommand{\terms}{\ensuremath{\mathcal{T}}\xspace}
\newcommand{\comp}{\ensuremath{\varsigma}\xspace}
\newcommand{\comps}{\ensuremath{\mathcal{C}}}
\newcommand{\inits}{\ensuremath{\textit{init}_{\hspace{1pt}\name{c}}}\xspace}
\newcommand{\initsE}{\ensuremath{\textit{init}_{\hspace{1pt}\name{s}_{\name{i}}}}\xspace}
\newcommand{\sep}{\ensuremath{\quad | \quad}}
\newcommand{\upd}[2]{\ensuremath{\llbracket \hspace{0.5pt} #1 \leftarrowtail \, #2 \hspace{0.5pt} \rrbracket}}
\newcommand{\sats}{\ensuremath{\;\vDash_{\!\langle \hspace{-1pt} \cdot \hspace{-1pt} \rangle}}}
\newcommand{\nsats}{\ensuremath{\;\nvDash_{\!\langle \hspace{-1pt} \cdot \hspace{-1pt} \rangle}}}
\newcommand{\set}[1]{\ensuremath{\{ #1 \}}}
\newcommand{\eval}{\ensuremath{\eta_{_{\!\langle \hspace{-1pt} \cdot \hspace{-1pt} \rangle\!}}\!}\xspace}
\newcommand{\evalid}{\ensuremath{\eta_{_{\!\langle \hspace{-1pt} \cdot \hspace{-1pt} \rangle_{\texttt{id}}\!}}\!}\xspace}
\newcommand{\impl}{\ensuremath{\rightarrow}}
\newcommand{\pterms}{\ensuremath{\terms_{\!P\hspace{-0.5pt}}}\xspace}
\newcommand{\fterms}{\ensuremath{\terms_{\!F\hspace{-0.5pt}}}\xspace}
\newcommand{\uterms}{\ensuremath{\terms_{\hspace{-0.5pt}U \!}}\xspace}
\newcommand{\utermsp}{\ensuremath{\terms_{\hspace{-0.5pt}U / \name{id}}}}
\newcommand{\functions}{\ensuremath{\mathcal{F}}\xspace}
\newcommand{\predicates}{\ensuremath{\mathcal{P}}\xspace}
\newcommand{\assign}[1]{\ensuremath{\langle #1 \rangle}}
\newcommand{\TSL}{\text{TSL}\xspace}
\newcommand{\inames}{\ensuremath{\mathbb{I}}\xspace}
\newcommand{\onames}{\ensuremath{\mathbb{O}}\xspace}
\newcommand{\pnames}{\ensuremath{\mathbb{P}}\xspace}
\newcommand{\fnames}{\ensuremath{\mathbb{F}}\xspace}
\newcommand{\vertices}{\ensuremath{V}}
\newcommand{\labeling}{\ensuremath{\ell}}
\newcommand{\dependencies}{\ensuremath{\delta}}
\newcommand{\scheck}{\text{\ding{51}}}
\newcommand{\serror}{\text{\ding{55}}}
\newcommand{\cstep}{\ensuremath{c}}
\newcommand{\para}[1]{\noindent {\bf #1. }}
\newcommand{\eg}{{\em e.g.~\xspace}}
\begin{document}

\title{Temporal Stream Logic: \\[0.2em] Synthesis beyond the
  Bools\thanks{Supported by the European Research Council (ERC) Grant
    OSARES (No.\ 683300), the Collaborative Research Center (TRR 248,
    389792660), and the National Science Foundation (NSF) Grant
    CCF-1302327.}}

\author{
       Bernd Finkbeiner\inst{1}
  \and Felix Klein\inst{1}
  \and Ruzica Piskac\inst{2}
  \and Mark Santolucito\inst{2}
}

\institute{Saarland University, Saarbrücken, Germany
  \and
Yale University, New Haven, USA}

\maketitle

\begin{abstract}
  Reactive systems that operate in environments with complex data,
  such as mobile apps or embedded controllers with many sensors, are
  difficult to synthesize.  Synthesis tools usually fail for such
  systems because the state space resulting from the discretization
  of the data is too large.  We introduce TSL, a new temporal logic
  that separates control and data. We provide a CEGAR-based synthesis
  approach for the construction of implementations that are guaranteed
  to satisfy a TSL specification for all possible instantiations of
  the data processing functions.  TSL provides an attractive trade-off
  for synthesis. On the one hand, synthesis from TSL, unlike synthesis
  from standard temporal logics, is undecidable in general. On the
  other hand, however, synthesis from TSL is scalable, because it is
  independent of the complexity of the \mbox{handled} data.  Among
  other benchmarks, we have successfully synthesized a music player
  Android app and a controller for an autonomous vehicle in the Open
  Race Car Simulator (TORCS).
\end{abstract}

\section{Introduction}
\label{sec:intro}
In reactive synthesis, we automatically translate a formal specification, typically
given in a temporal logic, into a controller that is guaranteed to
satisfy the specification. Over the past two decades there has been
much progress on reactive synthesis, both in terms of algorithms,
notably with techniques like GR(1)-synthesis~\cite{bloem2012synthesis} and bounded synthesis~\cite{Schewe:2013},
and in terms of tools, as showcased, for example, in the annual
{\sc syntcomp} competition~\cite{SYNTCOMP2017}.

In practice however, reactive synthesis has seen limited success.
One of the largest published success stories~\cite{Khalimov2014ParameterizedSynthesisCaseStudyAMBA} is the synthesis of
the AMBA bus
protocol. To push synthesis even further, automatically synthesizing a controller for an autonomous system has been recognized to be of critical importance~\cite{wongpiromsarn2013synthesis}.
Despite
many years of experience with synthesis tools, our own attempts to
synthesize such controllers with existing tools have been unsuccessful. The reason is that the tools are unable to
handle the data complexity of the controllers. The controller only
needs to switch between a small number of behaviors, like steering
during a bend, or shifting gears on high rpm. The number of control
states in a typical controller (cf. \cite{SCAV2017}) is thus not much
different from the arbiter in the AMBA case study. However, in order to correctly initiate transitions between \mbox{control states}, the driving
controller must continuously process data from more than 20 sensors.

If this data is included (even as a rough discretization) in the state
space of the controller, then the synthesis problem is much too large
to be handled by any available tools. It seems clear then, that a scalable synthesis approach must separate
control and data. If we assume that the data processing is handled by
some other approach (such as deductive synthesis~\cite{Manna:1980:DAP:357084.357090} or manual programming), is it
then possible to solve the remaining reactive synthesis problem?

In this paper, we show scalable reactive synthesis is indeed possible. Separating
data and control has allowed us to synthesize reactive systems, including an
autonomous driving controller and a music player app, that
had been impossible to synthesize with previously available tools. However, the
separation of data and control implies some fundamental changes to
reactive synthesis, which we describe in the rest of the paper. The
changes also imply that the reactive synthesis problem is no
longer, in general, decidable. We thus trade theoretical
decidability for practical scalability, which is, at least with regard to the goal of synthesizing realistic systems, definitely an attractive trade-off.

We introduce Temporal Stream Logic (\TSL), a new temporal logic that includes
\emph{updates}, such as $\upd{\name{y}}{\name{f}~\name{x}}$, and
predicates over arbitrary function terms. The update $\upd{\name{y}}{\name{f}~\name{x}}$ indicates
that the result of applying function~$ \name{f} $ to variable~$ \name{x} $ is assigned
to $\name{y}$. The implementation of predicates and functions is
not part of the synthesis problem. Instead, we look for a
system that satisfies the \TSL specification \emph{for all
possible interpretations of the functions and predicates}.

This implicit quantification over all possible interpretations provides a useful
abstraction: it allows us to \emph{independently} implement the data
processing part. On the other hand, this quantification is also the
reason for the undecidability of the synthesis problem. If a predicate
is applied to the same term \emph{twice}, it must (independently of the interpretation)
return the \emph{same} truth value. The synthesis must then implicitly
maintain a (potentially infinite) set of terms to which the predicate
has previously been applied. As we show later, this set of terms can
be used to encode PCP~\cite{post1946} for a proof of undecidability.

We present a practical synthesis approach for \TSL specifications, which is based
on bounded synthesis~\cite{Schewe:2013} and counterexample-guided abstraction refinement (CEGAR)~\cite{DBLP:journals/jacm/ClarkeGJLV03}.
We use bounded synthesis to search for an implementation up to a (iteratively growing)
bound on the number of states. This approach underapproximates the actual \TSL synthesis
problem by leaving the interpretation of the predicates to the environment.
The underapproximation allows for inconsistent behaviors: the environment might assign different
truth values to the same predicate when evaluated at different points in time,
even if the predicate is applied to the same term.
However, if we find an implementation in this underapproximation, then the CEGAR loop terminates and we have a correct implementation for the original \TSL specification. If we do not find an implementation in the underapproximation,
we compute a counter strategy for the environment. Because bounded synthesis reduces the synthesis problem to a safety game, the counter strategy is a reachability strategy that can be represented as a finite tree. We check whether the counter strategy is spurious by searching for a pair of positions in the strategy where some predicate results in different truth values when applied to the same term.
If the counter strategy is not spurious, then no implementation exists for the considered bound, and we increase the bound.
If the counter strategy is spurious, then we introduce a constraint
into the specification that eliminates the incorrect interpretation
of the predicate, and continue with the refined specification.

\tikzstyle{block} = [rectangle, draw, fill=gray!20,
    text width=7em, text centered, rounded corners, minimum height=3em, node distance=2cm]
\tikzstyle{block3} = [rectangle, draw, fill=gray!20,
    text width=7em, text centered, rounded corners, minimum height=3em, node distance=1.5cm]
\tikzstyle{block2} = [rectangle, draw,
    text width=7em, text centered, rounded corners, minimum height=3em, node distance=2cm]
\tikzstyle{input} = [rectangle, draw, fill=green!20,
    text width=7em, text centered, rounded corners, minimum height=2em, node distance=2cm]
\tikzstyle{line} = [draw, -latex']
\tikzstyle{line} = [draw, -latex']
\tikzstyle{line2} = [draw, dotted, -latex']

\tikzstyle{main} = [rectangle, draw, fill=green!30, text width=3.5em, inner sep=0.5em, text centered, rounded corners=4, minimum height=2.5em]

\tikzstyle{main2} = [rectangle, draw, fill=green!40!blue!15!, text width=2.5em, inner sep=0.5em, text centered, minimum height=2em]

\tikzstyle{main3} = [rectangle, draw, fill=blue!10!green!20, minimum width=1.4em, inner sep=0.5em, text centered, rounded corners=4, minimum height=1.4em]

\tikzstyle{conv} = [rectangle, draw, fill=blue!15, inner sep=0.5em, text centered,
 minimum height=2em,,anchor=center, drop shadow]
\tikzstyle{conv2} = [rectangle, draw, fill=white!15, inner sep=0.5em, text centered,
 minimum height=2em,,anchor=center]

\begin{figure}[t]
  \centering
  \begin{tikzpicture}[scale=1.1]
    \clip (-1.1,-4.28) rectangle (10.4,2.56);

    \node[opacity=0] at (10.7,0) {x};

    \node[main,fill=yellow!15] at (0.2,0.55) (tsl) {\large TSL};

    \node[main] at (0.2,-0.55) (tam) {\large CFM};

    \node[conv,anchor=west,minimum height=5.5em,inner sep=0pt,minimum width=13.5em] at (2.1,0) (synt) {
      \begin{tikzpicture}[>=stealth, anchor=center,minimum height=1em,minimum width=1em]
        \node at (0,0) {Synthesis};

        \node[main2] at (-0.8,-1) (ltl) {LTL};
        \node[main2,text width=3em] at (0.8,-1) (mealy) {Circuit};

        \draw[thick] (-2,-0.35) -- (-1,-0.35);
        \draw[thick] (-1,-0.35) edge[dashed] (1,-0.35);
        \draw[thick] (1,-0.35) edge[->] (2,-0.35);
        \draw (-1.7,-0.35) -- (-1.7,-1);
        \draw (-1.7,-1) edge[->] (ltl.west);
        \draw (ltl.east) edge[->] (mealy.west);
        \draw (mealy.east) -- (1.7,-1);
        \draw (1.7,-1) edge[->] (1.7,-0.35);
      \end{tikzpicture}
    };

    \node[draw,circle,fill=yellow!15] at ($ (synt.north) + (-1.9,0.5) $) (bound) {\large $ n $};

    \node[conv,minimum height=2.5em,minimum width=13.5em] at ($ (synt) + (0,-1.6) $) (trans) {
      \begin{tabular}{c}
        FRP Translator
      \end{tabular}
    };

    \node[conv,minimum height=2.5em,minimum width=13.5em] at ($ (trans) + (0,-1.1) $) (inte) {
        Project Context
        \qquad \qquad \
    };

    \node[conv,anchor=west,minimum height=2.5em] at (2.1,-3.8) (compiler) {
      \begin{tabular}{c}
        ~Compiler~ \\[-0.2em]
      \end{tabular}
    };

    \node[conv2,anchor=north,dashed] at ($ (synt.north east) + (2,0) $) (syntt) {
      \begin{tabular}{c}
        LTL \\ Synthesis Tool \\[-0.2em]
      \end{tabular}
    };

    \node at (0,0.55) (syntoutr) {};
    \node at (0,-0.4) (syntoutl) {};
    \node at (0,-2.65) (transout) {};
    \node at (0,-4.8) (compout) {};
    \node at (0,-4.7) (compin) {};

    \node[main,fill=orange!20,text width=4em] at ($ (synt.north) + (0.15,1.1) $) (counter) {
      \begin{tabular}{c}
        Counter \\
        Strategy
      \end{tabular}
    };

    \node[conv,minimum height=2.5em,minimum width=7.5em,anchor=center] at
    (tsl |- counter) (refi) {Refinement};

    \node[main,fill=red!20,text width=6em,xshift=20]  at (counter -| syntt) (unre) {\textbf{unrealizable}};

    \node[conv2,anchor=east,dashed,yshift=21,xshift=-8] at (unre.east |- trans) (pattern) {
      \begin{tabular}{c}
        Design Pattern: \\ Arrow | Applicative \\[-0.2em]
      \end{tabular}
    };

    \node[main] at (tsl |- inte) (frp) {\large FRP};

    \node[main,anchor=east] at (compiler -| inte.east) (exe) {\large EXE};

    \node[conv2,anchor=south east,dashed] at (compiler.south -| unre.east) (terms) {
      \begin{tabular}{c}
        Function \& Predicate\\ Implementations
      \end{tabular}
    };

    \node[conv2,dashed] at ($ (terms -| unre) + (0,1.7) $) (lib) {
      \begin{tabular}{c}
        FRP Library
      \end{tabular}
    };

    \node[main3] at ($ (inte) + (1.3,0) $) (module) {};

    \draw[->,>=stealth, very thick] ($ (tsl) + (-1.4,0) $) -- (tsl);
    \draw[line width=0.5em,color=black!40,->] (tsl) -- (tsl.east -| synt.west);
    \draw[line width=0.5em,color=black!40,->] (bound) -| ($ (synt.north) + (-1.15,0) $);
    \draw[line width=0.5em,color=black!40,->] (tam -| synt.west) to node[above,xshift=5] {\large \color{green!50!black} \scheck} (tam);
    \draw[line width=0.5em,color=black!40,->] (counter) to node[above,xshift=-5] {\color{black} non-spurious} (unre);
    \draw[line width=0.5em,color=black!40,->] (counter) to node[above,xshift=5] {\color{black} spurious} (refi);
    \draw[line width=0.5em,color=black!40,->] (refi.south -| tsl) -- (tsl);
    \draw[line width=0.5em,color=black!40,->] (frp -| inte.west) -- (frp);
    \draw[line width=0.5em,color=black!40,->] (tam) |- (trans);
    \draw[line width=0.5em,color=black!40,->] (frp) |- (compiler.west);

    \draw[line width=0.5em,color=black!40,->] (synt.north -| counter) to node[right,xshift=5,yshift=-1] {\large \color{red!60!black} \serror} (counter);

    \draw[line width=0.5em,color=black!40,->] (compiler.east |- exe) -- (exe);

    \draw[line width=0.5em,color=black!40,->] ($ (trans.south) + (1.3,0) $) -- (module.north);

    \draw[->,>=stealth, very thick] (syntt) -- (syntt -| synt.east);
    \draw[->,>=stealth, very thick] ($ (pattern.south west) + (0.6,0) $) |- ($ (trans.east) + (0,-0.1) $);
    \draw[->,>=stealth, very thick] (lib.south) |- ($ (inte.east) + (0,0.125) $);
    \draw[->,>=stealth, very thick] (terms.north) |- ($ (inte.east) + (0,-0.125) $);

  \end{tikzpicture}
  \caption{The TSL synthesis procedure uses a modular design. Each
    step takes input from the previous step as well as interchangeable
    modules (dashed boxes).}
  \label{fig:system}
\end{figure}

A general overview of this procedure is shown in \cref{fig:system}.
The top half of the figure depicts the bounded search for an
implementation that realizes a TSL specification using the CEGAR loop to
refine the specification. If the specification is realizable, we
proceed in the bottom half of the process, where a synthesized
implementation is converted to a control flow model (CFM) determining
the control of the system. We then specialize the CFM to Functional
Reactive Programming (FRP), which is a popular and expressive
programming paradigm for building reactive programs using functional
programming languages~\cite{hudakFRAN}. Our framework supports any FRP
library using the \textit{Arrow} or \textit{Applicative} design patterns, which covers
most of the existing FRP libraries
(e.g.~\cite{reactivebanana,clash2015,courtney2003yampa,perez2016yampa}).
Finally, the synthesized control flow is embedded into a project
context, where it is equipped with function and predicate
implementations and then compiled to an executable program.

Our experience with synthesizing systems based on \TSL specifications has been extremely positive. The synthesis works
for a broad range of benchmarks, ranging from classic reactive synthesis
problems (like escalator control), through programming exercises
from functional reactive programming, to novel
case studies like our music player app and the autonomous driving
controller for a vehicle in the Open Race Car Simulator (TORCS).

\section{Motivating Example}
\label{sec:motiv}
\newcommand{\applink}{\url{https://play.google.com/store/apps/details?id=com.mark.myapplication}.}

To demonstrate the utility of our method, we synthesized a music player Android app\footnote{\applink} from a \TSL specification.
A major challenge in developing Android apps is the temporal behavior of an app through the \textit{Android lifecycle}~\cite{Shan16}.
The Android lifecycle describes how an app should handle being paused, when moved to the background, coming back into focus, or being terminated.
In particular, \textit{resume and restart errors} are commonplace and difficult to detect and correct~\cite{Shan16}.
Our music player app demonstrates a situation in which a resume and restart error could be unwittingly introduced when programming by hand, but is avoided by providing a specification.
We only highlight the key parts of this example here to give an intuition of \TSL, leaving a more in-depth exposition to \cref{apx:musicspec}.

Our music player app utilizes the Android music player library~(\name{MP}), as well as its control interface~(\name{Ctrl}). It pauses any playing music when moved to the background (for instance if a call is received), and continues playing the currently selected track~(\name{Tr}) at the last track position when the app is resumed.
In the Android system~(\name{Sys}), the \texttt{leaveApp} method is called whenever the app moves to the background, while the \texttt{resumeApp} method is called when the app is brought back to the foreground. To avoid confusion between pausing music and pausing the app, we use \texttt{leaveApp} and \texttt{resumeApp} in place of the Android methods onPause and onResume.
A programmer might manually write code for this as shown on the left in \cref{fig:smallcode}.

The behavior of this can be directly described in \TSL as shown on the right in \cref{fig:smallcode}.
Even eliding a formal introduction of the notation for now, the specification closely matches the textual specification.
First, when the user leaves the app and the music is playing, the music pauses.
Likewise for the second part, when the user resumes the app, the music starts playing again.

\begin{figure}[t]
\vspace{-1em}
\begin{minipage}{.42\textwidth}
  \vspace{-0.8em}
\begin{lstlisting}
Sys.leaveApp()
  if (MP.musicPlaying())
    Ctrl.pause();
\end{lstlisting}
\vspace{-1em}
\begin{lstlisting}
Sys.resumeApp() {
  pos = MP.trackPos();
  Ctrl.play(Tr,pos);
}
\end{lstlisting}
\vspace{-0.8em}
\end{minipage}%
\vrule{}%
\begin{minipage}{.59\textwidth}
\vspace{-0.8em}
\begin{align*}
& \name{ALWAYS} \; \Big(\name{leaveApp} \ \, \name{Sys} \; \wedge \; \name{musicPlaying} \ \, \name{MP} \\[-0.5em]
& \quad \hspace{2.5em} \impl \upd{\name{Ctrl}}{\const{pause}} \Big) \\[0.8em]
& \name{ALWAYS} \; \Big(\name{resumeApp} \ \, \name{Sys}  \\[-0.5em]
& \quad \hspace{2.5em} \impl  \upd{\name{Ctrl}}{\name{play} \ \, \name{Tr} \ \, (\name{trackPos} \ \, \name{MP})} \Big)
\end{align*}
\vspace{-0.8em}
\end{minipage}
\vspace{-0.5em}
\caption{Sample code and specification for the music player app.}
\label{fig:smallcode}
\end{figure}
However, assume we want to change the behavior so that the music only plays on resume when the music had been playing before leaving the app in the first place.
In the manually written program, this new functionality requires an additional variable~\texttt{wasPlaying} to keep track of the music state.
Managing the state requires multiple changes in the code as shown on the left in \cref{fig:bigcode}.
The required code changes include: a conditional in the \texttt{resumeApp} method, setting \texttt{wasPlaying} appropriately in two places in \texttt{leaveApp}, and providing an initial value.
Although a small example, it demonstrates how a minor change in functionality may require wide-reaching code changes.
In addition, this change introduces a globally scoped variable, which then might accidentally be set or read elsewhere.
In contrast, it is a simple matter to change the TSL specification to reflect this new functionality.
Here, we only update one part of the specification to say that if the user leaves the app and the music is playing, the music has to play again as soon as the app resumes.

\begin{figure}[t]
\begin{minipage}{.44\textwidth}
\vspace{-0.8em}
\begin{lstlisting}
bool wasPlaying = false;
\end{lstlisting}
\vspace{-0.5em}
\begin{lstlisting}
Sys.leaveApp()
  if (MP.musicPlaying()) {
    wasPlaying = true;
    Ctrl.pause();
  }
  else
    wasPlaying = false;
\end{lstlisting}
\vspace{-0.5em}
\begin{lstlisting}
Sys.resumeApp()
  if (wasPlaying) {
    pos = MP.trackPos();
    Ctrl.play(Tr,pos);
  }
\end{lstlisting}
\vspace{-0.8em}
\end{minipage}%
\vrule{}%
\begin{minipage}{.57\textwidth}
\begin{align*}
& \,\name{ALWAYS} \; \Big( (\name{leaveApp} \ \, \name{Sys} \; \wedge \ \name{musicPlaying} \ \, \name{MP} \\
& \,\quad \hspace{2.5em} \impl \upd{\name{Ctrl}}{\const{pause}} ) \\[0.5em]
& \,\quad \hspace{2.5em} \; \wedge \, (\upd{\name{Ctrl}}{\name{play} \ \, \name{Tr} \ (\name{trackPos} \ \, \name{MP})} \ \\
& \,\quad \hspace{2.5em} \phantom{\impl} \ \ \name{AS\_SOON\_AS} \ \ \name{resumeApp} \ \, \name{Sys} ) \Big)
\end{align*}
\end{minipage}
\vspace{-0.5em}
\caption{The effect of a minor change in functionality on code versus a specification.}
\label{fig:bigcode}
\end{figure}
Synthesis allows us to specify a temporal behavior without worrying about the implementation details.
In this example, writing the specification in \TSL has eliminated the need of an additional state variable, similarly to a higher order \texttt{map} eliminating the need for an iteration variable.
However, in more complex examples the benefits compound, as \TSL provides a modular interface to specify behaviors, offloading the management of multiple interconnected temporal behaviors from the user to the synthesis engine.

\section{Preliminaries}
\label{sec:prelim}
We assume time to be discrete and denote it by the set $ \dtime $ of
positive integers.  A value is an arbitrary object of arbitrary
type. $ \values $ denotes the set of all values.  The Boolean values
are denoted by $ \bool \subseteq \values $.  A
stream~$ \signal \from \dtime \to \values $ is a function fixing
values at each point in time.
An $ n $-ary function~$ f \from \values^{n} \to \values $ determines
new values from $ n $ given values, where the set of all functions (of
arbitrary arity) is given by~$ \functions $. Constants are functions
of arity 0. Every constant is a value, i.e., is an element of
$ \functions \cap \values $. An $ n $-ary
predicate~$ p \from \values^{n} \to \bool $ checks a property
over~$ n $ values.  The set of all predicates (of arbitrary arity) is
given by~$ \predicates $, where $ \predicates \subseteq \functions $.
We use $ \fspace{\hspace{-1pt}A}{B} $ to denote the set of all total functions with
domain~$ A $ and image~$ B $.

In the classical synthesis setting, inputs and outputs are vectors of
Booleans, where the standard abstraction treats inputs and outputs as
atomic propositions $ \mathcal{I} \cup \mathcal{O} $, while their
Boolean combinations form an
alphabet~\mbox{$ \Sigma = 2^{\mathcal{I} \cup \mathcal{O}}
  $}. Behavior then is described through infinite sequences
$ \alpha = \alpha(0)\alpha(1)\alpha(2) \ldots \in \Sigma^{\omega} $.
A \textit{specification} describes a relation between input
sequences~\mbox{$ \alpha \in (2^{\mathcal{I}})^{\omega} $} and output
sequences~\mbox{$ \beta \in (2^{\mathcal{O}})^{\omega} $}. Usually,
this relation is not given by explicit sequences, but by a fomula in a
temporal logic.  The most popular such logic is Linear Temporal Logic
(LTL)~\cite{Pnueli:1977}, which uses Boolean connectives to specify
behavior at specific points in time, and temporal connectives, to
relate sub-specifications over time. The realizability and synthesis
problems for LTL are 2\textsc{ExpTime}-complete~\cite{PnueliR89}.

An implementation describes a realizing strategy, formalized via infinite trees. A $ \Phi $-labeled
and \mbox{$ \Upsilon $-}branching tree is a
function~$ \sigma \from \Upsilon^{*} \to \Phi $, where $ \Upsilon $
denotes the set of branching directions along a tree. Every node of
the tree is given by a finite prefix $ v \in \Upsilon^{*} $, which
fixes the path to reach a node from the root. Every node is labeled
by an element of $ \Phi $. For infinite
paths~$ \nu \in \Upsilon^{\omega} $, the
branch~$ \branch{\sigma}{\nu} $ denotes the sequence of labels that appear
on $ \nu $, i.e.,
$ \forall t \in \nats. \ (\branch{\sigma}{\nu})(t) =
\sigma(\nu(0) \ldots \nu(t-1)) $.

\section{Temporal Stream Logic}
\label{sec:TSL}
We present a new logic: Temporal Stream Logic (\TSL), which is
especially designed for synthesis and allows for the manipulatation of infinite
streams of arbitrary (even non-enumerative, or higher order) type. It
provides a straightforward notation to specify how outputs are
computed from inputs, while using an intuitive interface to access
time. The main focus of \TSL is to describe temporal control
flow, while abstracting away concrete implementation details. This not
only keeps the logic intuitive and simple, but also allows a user to identify
problems in the control flow even without a concrete implementation at
hand. In this way, the use of \TSL scales up to any required abstraction, such as API
calls or complex algorithmic transformations.

\medskip

\noindent \textit{Architecture} A TSL formula~$ \varphi $ specifies a
reactive system that in every time step processes a finite number of inputs~$ \inames $
and produces a finite number of outputs~$ \onames $. Furthermore, it
uses cells~$ \cells $ to store a value computed at time~$ t $,
which can then be reused in the next time step~$ t + 1 $. An overview
of the architecture of such a system is given in \cref{fig:tslarchitecture}. In terms
of behavior, the environment produces infinite streams of input data,
while the system uses pure (side-effect free) functions to transform
the values of these input streams in every time step. After their
transformation, the data values are either passed to an output stream
or are passed to a cell, which pipes the output value from one time step back to the corresponding input value of the next.
The behaviour of the system is captured by its infinite execution over time.

\begin{figure}[t]
  \centering

  \begin{subfigure}[b]{0.58\textwidth}
  \begin{tikzpicture}[scale=0.8]

    \node[anchor=east,inner sep=0pt] at (-2.8,-1.5) {
      \small
      \begin{tabular}{c}
        inputs: \\[0.1em] $ \inames $
      \end{tabular}
    };

    \node at (0,1.38) {
      \small
      cells: $ \cells $
    };

    \node[anchor=west,inner sep=0pt] at (2.8,-1.5) {
      \small
      \begin{tabular}{c}
        outputs: \\[0.1em]
        $ \onames $
      \end{tabular}
    };

    \node at (0,0) {
      \begin{tikzpicture}[xscale=0.8,yscale=0.56]
        \node[fill, fill=blue!30,minimum height=5.5em, minimum width=10.5em] (C) {};

        \node at (C) {
          \small
          \begin{tabular}{c}
            \textit{reactive system} \\[0.4em]
            \textit{implementing a} \\[0.4em]
            \textit{TSL specification~$ \varphi $}
            \end{tabular}
        };

        \node[minimum size=0.9em] (H0) at (0,1.95) {};
        \node[minimum size=0.9em] (H1) at (0,2.6) {};
        \node[minimum size=0.9em] (H2) at (0,3.8) {};

        \path[->,>=stealth,line width=0.7pt]
        ($ (C.west) + (-0.6,-1.3) $) edge ($ (C.west) + (0,-1.3) $)
        ($ (C.west) + (-0.6,-0.6) $) edge ($ (C.west) + (0,-0.6) $)
        ($ (C.west) + (-0.6,-0.3) $) edge ($ (C.west) + (0,-0.3) $)
        ($ (C.east) + (0,-1.3) $) edge ($ (C.east) + (0.6,-1.3) $)
        ($ (C.east) + (0,-0.6) $) edge ($ (C.east) + (0.6,-0.6) $)
        ($ (C.east) + (0,-0.3) $) edge ($ (C.east) + (0.6,-0.3) $)
        ;

        \node at ($ (C.west) + (-0.35,-0.85) $) {\scalebox{0.6}{$ \vdots $}};
        \node at ($ (C.east) + (0.25,-0.85) $) {\scalebox{0.6}{$ \vdots $}};

        \draw[line width=0.7pt,-,>=stealth,gray]
        ($ (C.east) + (0,1.3) $) -- (2.7,1.3) |- (H0);
        \draw[line width=0.7pt,->,>=stealth,gray]
        (H0) -| (-2.7,1.3) -- ($ (C.west) + (0,1.3) $);

        \draw[line width=0.7pt,-,>=stealth,gray]
        ($ (C.east) + (0,1) $) -- (3,1) |- (H1);
        \draw[line width=0.7pt,->,>=stealth,gray]
        (H1) -| (-3,1) |- ($ (C.west) + (0,1) $);

        \draw[line width=0.7pt,-,>=stealth,gray]
        ($ (C.east) + (0,0.3) $) -- (3.6,0.3) |- (H2);
        \draw[line width=0.7pt,->,>=stealth,gray]
        (H2) -| (-3.6,0.3) |- ($ (C.west) + (0,0.3) $);

        \node at ($ (C.west) + (-0.35,0.75) $) {\scalebox{0.6}{$ \vdots $}};
        \node at ($ (C.east) + (0.25,0.75) $) {\scalebox{0.6}{$ \vdots $}};

        \fill[fill=orange!60]
        ($ (H0.north west) + (0,-0.1) $) --
        ($ (H0.south west) + (0,0.1) $) --
        ($ (H0.south west) + (0.1,0) $) --
        ($ (H0.south east) + (-0.1,0) $) --
        ($ (H0.south east) + (0,0.1) $) --
        ($ (H0.north east) + (0,-0.1) $) --
        ($ (H0.north east) + (-0.1,0) $) --
        ($ (H0.north west) + (0.1,0) $) --
        cycle;

        \fill[fill=orange!60]
        ($ (H1.north west) + (0,-0.1) $) --
        ($ (H1.south west) + (0,0.1) $) --
        ($ (H1.south west) + (0.1,0) $) --
        ($ (H1.south east) + (-0.1,0) $) --
        ($ (H1.south east) + (0,0.1) $) --
        ($ (H1.north east) + (0,-0.1) $) --
        ($ (H1.north east) + (-0.1,0) $) --
        ($ (H1.north west) + (0.1,0) $) --
        cycle;

        \fill[fill=orange!60]
        ($ (H2.north west) + (0,-0.1) $) --
        ($ (H2.south west) + (0,0.1) $) --
        ($ (H2.south west) + (0.1,0) $) --
        ($ (H2.south east) + (-0.1,0) $) --
        ($ (H2.south east) + (0,0.1) $) --
        ($ (H2.north east) + (0,-0.1) $) --
        ($ (H2.north east) + (-0.1,0) $) --
        ($ (H2.north west) + (0.1,0) $) --
        cycle;

      \end{tikzpicture}
    };
  \end{tikzpicture}
\caption{Architecture \mbox{\ }}
\label{fig:tslarchitecture}
\end{subfigure}
\begin{subfigure}[b]{0.41\textwidth}
  \begin{tikzpicture}

    \fill[gray!20,rounded corners=3] (-2.5,-2.5) rectangle (2.5,0.7);

    \node[anchor=center] at (0,0.45) {
      \textbf{Function Term:}
    };

    \node[anchor=center] at (0,0) {
      $ \fterm \ := \ \; \name{s}_{\name{i}} \! \! \sep \! \name{f} \ \, \fterm^{0} \
      \, \fterm^{1} \ \, \cdots \ \, \fterm^{n-1} $
    };

    \draw (-2.5,-0.35) -- (2.5,-0.35);

    \node[anchor=center] at (0,-0.65) {
      \textbf{Predicate Term:}
    };

    \node[anchor=center] at (0,-1.1) {
      $ \pterm \ := \ \;
      \name{p} \ \, \fterm^{0} \ \, \fterm^{1} \ \, \cdots \ \, \fterm^{n-1}  $
    };

    \draw (-2.5,-1.45) -- (2.5,-1.45);

    \node[anchor=center] at (0,-1.8) {
      \textbf{Update:}
    };

    \node[anchor=center] at (0,-2.25) {
      $ \upd{\name{s}_{\name{o}}}{\fterm} $
    };
  \end{tikzpicture}
  \caption{Term Definitions \mbox{\ }}
  \label{fig:termdefinitions}
\end{subfigure}

\caption{General architecture of reactive systems that are specified in TSL
  on the left, and the structure of function, predicate and updates
  on the right.}
\end{figure}

\medskip

\fussy

\noindent \textit{Function Terms, Predicate Terms, and Updates} In \TSL we
differentiate between two elements: we use purely functional
transformations, reflected by \mbox{functions~$ f \in \functions $}
and their compositions, and \mbox{predicates~$ p \in \predicates $},
used to control how data flows inside the system.
To argue about both
elements we use a term based notation, where we distinguish between
function terms~$ \fterm $ and predicate terms~$ \pterm $,
respectively. Function terms are either constructed from inputs or
cells \mbox{($ \name{s}_{\name{i}} \in \inames \cup \cells $)}, or from
functions, recursively applied to a set of function terms. Predicate
terms are constructed similarly, by applying a predicate to a set of
function terms.
\sloppy
Finally, an update takes the result of a function computation and passes it either to an output
or a cell ($ \name{s}_{\name{o}} \in \onames \cup \cells $). An overview of the syntax
of the different term notations is given in \cref{fig:termdefinitions}.
Note that we use curried argument notation similar to functional
programming languages.

We denote sets of function and predicate terms, and updates by
$ \fterms $, $ \pterms $ and $ \uterms $, respectively, where
$ \pterms \subseteq \fterms $.  We use $ \fnames $ to denote the set
of function literals and $ \pnames \subseteq \fnames $ to denote the
set of predicate literals, where the literals $ \name{s}_{\name{i}} $,
$ \name{s}_{\name{o}} $, $ \name{f} $ \linebreak and~$ \name{p} $ are
symbolic representations of inputs and cells, outputs and cells, and
functions and predicates, respectively.  Literals are used to
construct terms as shown in \cref{fig:termdefinitions}. Since we use a
symbolic representation, functions and predicates are not tied to a
specific implementation. However, we still classify them according to
their arity, i.e., the number of function terms they are applied to,
as well as by their type: input, output, cell, function or
predicate. Furthermore, terms can be compared syntactically using the
equivalence relation~$ \equiv $.  To assign a semantic interpretation
to functions, we use an assignment function
\mbox{$ \assign{\cdot} \from \fnames \to \functions $}.

\medskip

\noindent \textit{Inputs, Outputs, and Computations} We consider momentary
inputs \mbox{$ i \in \fspace{\inames}{\values} $}, which are
assignments of inputs~$ \name{i} \in \inames $ to values
$ v \in \values $. For the sake of readability let
$ \inputs = \fspace{\inames}{\values} $. Input streams are infinite
sequences~\mbox{$ \iota \in \inputs^{\hspace{0.2pt}\omega} $}
consisting of infinitely many momentary inputs.

Similarly, a momentary output~$ o \in \fspace{\onames}{\values} $ is
an assignment of outputs~$ \name{o} \in \onames $ to values
$ v \in \values $, where we also use
$ \outputs = \fspace{\onames}{\values} $. Output streams are infinite
sequences~$ \varrho \in \outputs^{\hspace{0.5pt}\omega} $. To capture
the behavior of a cell, we introduce the notion of a
computation~$ \comp $.
A computation fixes the function terms that are used to compute outputs and cell updates,
without fixing semantics of function literals.
 Intuitively, a computation only determines which
function terms are used to compute an output, but abstracts from actually
computing it.

The basic element of a computation is a computation
step~$ \cstep \in \fspace{\onames \cup \cells}{\fterms} $, which is an
assignment of outputs and
cells~$ \name{s}_{\name{o}} \in \onames \cup \cells $ to function
terms~$ \fterm \in \fterms $. For the sake of readability let
$ \comps = \fspace{\onames \cup \cells}{\fterms} $. A computation step
fixes the control flow behaviour at a single point in time. A
computation~\mbox{$ \comp \in \comps^{\omega} $} is an infinite sequence of
computation steps.

As soon as input streams, and function and predicate implementations
are known, computations can be turned into output streams. To this
end, let $ \assign{\cdot} \from \fnames \to \functions $ be some
function assignment.  Furthermore, assume that there are predefined
constants~$ \inits \in \functions \cap \values $ for every
cell~$ \name{c} \in \cells $, which provide an initial
value for each stream at the initial point in time. To receive an
output stream from a computation~$ \comp \in \comps^{\omega} $ under
the input stream $ \iota $, we use an evaluation
function~$ \eval \from \comps^{\omega} \times
\inputs^{\hspace{0.2pt}\omega} \times \dtime \times \fterms \to
\values $:
\begin{eqnarray*}
  \eval(\comp, \iota, t, \name{s}_{\name{i}}) & = &
    \begin{cases}
      \iota(t)(\name{s}_{\name{i}}) & \text{if } \name{s}_{\name{i}} \in \inames \\
      \initsE &
      \text{if } \name{s}_{\name{i}} \in \cells \ \wedge \ t = 0 \\
      \eval(\comp, \iota, t-1, \comp(t-1)(\name{s}_{\name{i}})) &
      \text{if } \name{s}_{\name{i}} \in \cells \ \wedge \ t > 0
    \end{cases}
  \\[0.5em]
  \eval(\comp, \iota, t, \name{f} \ \term_{0} \ \cdots \ \term_{m-1}) & = &
  \assign{\name{f}} \ \eval(\comp,\iota, t,\term_{0}) \
  \cdots \ \eval(\comp,\iota, t,\term_{m-1})
\end{eqnarray*}
Then
$ \varrho_{\hspace{-1pt}\langle \hspace{-1pt}\cdot
  \hspace{-1pt}\rangle \hspace{-1pt}, \comp, \iota} \in
\outputs^{\hspace{0.5pt}\omega} $ is defined via
$ \varrho_{\hspace{-1pt}\langle \hspace{-1pt}\cdot
  \hspace{-1pt}\rangle \hspace{-1pt}, \comp, \iota}(t)(\name{o}) =
\eval(\comp, \iota, t, \name{o}) $ for all $ t \in \dtime $,
$ \name{o} \in \onames $.

\medskip
\smallskip

\noindent \textit{Syntax} Every TSL formula~$ \varphi $ is built
according to the following grammar:
\begin{equation*}
  \varphi \ \ := \ \
  \term \in \pterms\cup \uterms
  \!\sep\! \neg \varphi
  \!\sep\! \varphi \wedge \varphi
  \!\sep\! \LTLnext \varphi
  \!\sep\! \varphi \LTLuntil \varphi
\end{equation*}
An atomic proposition $\tau$ consists either of a predicate term, serving as a
Boolean interface to the inputs, or of an update, enforcing a
respective flow at the current point in time. Next, we have the
Boolean operations via negation and conjunction, that allow us to express
arbitrary Boolean combinations of predicate evaluations and
updates. Finally, we have the temporal operator next:
$ \LTLnext \psi $, to specify the behavior at the next point in time
and the temporal operator until:~$ \vartheta \LTLuntil \psi $, which
enforces a property~$ \vartheta $ to hold until the property~$ \psi $
holds, where $ \psi $ must hold at some point in the future
eventually.
\medskip
\smallskip

\noindent \textit{Semantics} Formally, this leads to the following
semantics.  Let $ \assign{\cdot} \from \fnames \to \functions $,
\mbox{$ \iota \in \inputs^{\hspace{0.2pt}\omega} $}, and
$ \comp \in \comps^{\omega} $ be given, then the validity of a $ \TSL$
formula~$ \varphi $ with respect to $ \comp $ and $ \iota $ is defined inductively
over $ t \in \dtime $ via:
\begin{equation*}
  \begin{array}{lcl}
    \\[-1.8em]
    \comp, \iota, t \sats \name{p} \ \term_{0} \ \cdots \ \term_{m-1} & \ :\Leftrightarrow \ \
    & \eval(\comp,\iota,t,\name{p} \ \term_{0} \ \cdots \ \term_{m-1}) \\[0.2em]
    \comp, \iota, t \sats \upd{\name{s}}{\!\term} & :\Leftrightarrow
    & \comp(t)(\name{s}) \equiv \term \\[0.2em]
    \comp, \iota, t \sats \neg \psi & :\Leftrightarrow
    & \comp, \iota, t \nsats \psi \\[0.2em]
    \comp, \iota, t \sats \vartheta \wedge \psi & :\Leftrightarrow
    & \comp, \iota, t \sats \vartheta \ \wedge \ \comp, \iota, t \sats \psi \\[0.2em]
    \comp, \iota, t \sats \LTLnext \psi & :\Leftrightarrow
    & \comp, \iota, t+1 \sats \psi \\[0.2em]
    \comp, \iota, t \sats \vartheta \LTLuntil \psi & :\Leftrightarrow
    & \exists t'' \geq t. \ \
         \forall t \leq t' < t''. \ \ \comp, \iota, t' \sats \vartheta \ \,
         \wedge \ \, \comp, \iota, t'' \sats \psi
  \end{array}
\end{equation*}
Consider that the satisfaction of a predicate depends on the current
computation step and the steps of the past, while for updates it only
depend on the current computation step. Furthermore, updates are only
checked syntactically, while the satisfaction of predicates depends on
the given assignment~$ \assign{\cdot} $ and the input stream
$ \iota $.
We say that $ \comp $ and $ \iota $ satisfy $ \varphi $, denoted by
$ \comp, \iota \sats \varphi$, if $ \comp, \iota, 0 \sats \varphi
$.

Beside the basic operators we have the standard derived Boolean
operators, as well as the derived temporal operators:
\textit{release}~$ \varphi \LTLrelease \psi \equiv \neg ((\neg \psi)
\LTLuntil (\neg \varphi)) $,
\textit{finally}~$ \LTLfinally \varphi \equiv \emph{true} \LTLuntil
\varphi $,
\textit{always}~$ \LTLglobally \varphi \equiv \emph{false} \LTLrelease
\varphi $, the \textit{weak} version of \textit{until}
$ \varphi \LTLweakuntil \psi \equiv (\varphi \LTLuntil \psi) \vee
(\LTLglobally \varphi) $, and \textit{as soon
  as}~$ \varphi \mathop{\mathcal{A}}\hspace{0.5pt} \psi \equiv \neg
\psi \LTLweakuntil (\psi \wedge \varphi) $.

\medskip

\noindent \textit{Realizability} We are interested in the following
realizability problem: given a $ \TSL $ formula~$ \varphi $, is there
a strategy~$ \sigma \in \fspace{\inputs^{+}}{\comps} $ such that for every
input $ \iota \in \inputs^{\omega} $ and function implementation
$ \assign{\cdot} \from \fnames \to \functions $, the branch
$ \branch{\sigma}{\iota} $ satisfies $ \varphi $, i.e.,
\begin{equation*}
  \exists \sigma \in \fspace{\inputs^{+}}{\comps}. \ \, \forall \iota \in \inputs^{\hspace{0.2pt}\omega}. \ \, \forall \assign{\cdot} \from
  \fnames \to \functions. \ \, \branch{\sigma}{\iota}, \iota \sats \varphi
\end{equation*}
If such a strategy~$ \sigma $ exists, we say $ \sigma $ realizes
$ \varphi $. If we additionally ask for a concrete instantiation of
$ \sigma $, we consider the synthesis problem of TSL.

\section{TSL Properties}
\label{sec:props}
In order to synthesize programs from TSL specifications, we give an overview of the first part of our synthesis process, as shown in \cref{fig:system}.
First we show how to approximate the semantics of TSL through a reduction to LTL.
However, due to the approximation, finding a realizable strategy immediately may fail.
Our solution is a CEGAR loop that improves the approximation.
This CEGAR loop is necessary, because the realizability problem of TSL is undecidable in general.

\medskip

\noindent \textit{Approximating TSL with LTL} We approximate TSL
formulas with weaker LTL formulas.  The approximation reinterprets the
syntactic elements, $\pterms$ and $\uterms$, as atomic propositions
for LTL. This strips away the semantic meaning of the function
application and assignment in TSL, which we reconstruct by later
adding assumptions lazily to the LTL formula.

Formally, let $ \pterms $ and $ \uterms $ be the finite sets of
predicate terms and updates, which appear in
$ \varphi_{\textit{TSL}} $, respectively. For every assigned signal, we
partition $ \uterms $ into
$ \biguplus_{\name{s}_{\name{o}} \in \onames \cup \cells}
\uterms^{\hspace{0.5pt}\name{s}_{\name{o}}} $. For every
$ \name{c} \in \cells $ let \mbox{$\utermsp^{\hspace{0.5pt}\name{c}} =
  \uterms^{\hspace{0.5pt}\name{c}} \cup \set{ \upd{\name{c}}{\name{c}}
  } $}, for $ \name{o} \in \onames $ let
$ \utermsp^{\hspace{0.5pt}\name{o}} = \uterms^{\hspace{0.5pt}\name{o}}
$, and let
$ \utermsp = \bigcup_{\name{s}_{\name{o}} \in \onames \cup \cells}
\utermsp^{\hspace{0.5pt}\name{s}_{\name{o}}} $.  We construct the LTL
formula~$ \varphi_{\textit{LTL}} $ over the input
propositions~$ \pterms $ and output propositions $ \utermsp $ as
follows:
\begin{equation*}
  \varphi_{\textit{LTL}} \, = \;
  \LTLglobally \Big ( \bigwedge_{\name{s}_{\name{o}} \in \onames \cup \cells} \,
  \bigvee_{\term \in \utermsp^{\hspace{0.5pt}\name{s}_{\name{o}}}}
  \big( \term \; \wedge \bigwedge_{\term' \in
    \utermsp^{\hspace{0.5pt}\name{s}_{\name{o}}} \setminus
    \set{ \term }} \neg \, \term' \big)  \Big) \ \wedge \
  \textsc{SyntacticConversion}\big(\varphi_{\textit{TSL}}\big)
\end{equation*}
Intuitively, the first part of the equation partially reconstructs the semantic meaning of updates by ensuring that a signal is not updated with multiple values at a time.
The second part extracts the reactive constraints of the TSL formula without the semantic meaning of functions and updates.
\begin{theorem}
  \label{thm:tsl2ltl} If $ \varphi_{\textit{LTL}} $ is realizable, then $ \varphi_{\textit{TSL}} $ is realizable.
\end{theorem}
\noindent The proof of \cref{thm:tsl2ltl} is given in \cref{proof:tsl2ltl}.
Note that unrealizability of $\varphi_{\textit{LTL}} $ does not imply that $ \varphi_{\textit{TSL}}$ is unrealizable.
It may be that we have not added sufficiently many environment assumptions to the approximation in order for the system to produce a realizing strategy.

\medskip

\label{ex:asLTL}

\begin{figure*}[t]
    \centering
    \begin{subfigure}[t]{0.28\textwidth}
      \centering
      $ \begin{array}{c}
          \\[-0.5em]
          \LTLglobally \; (\upd{\name{y}}{\name{y}} \, \vee \, \upd{\name{y}}{\name{x}}) \\[0.2em]
          \wedge \ \LTLeventually \, \name{p} \ \name{x} \, \impl \,
          \LTLeventually \, \name{p}\ \name{y} \\[-0.5em]
          \
        \end{array} $
        \caption{TSL specification}
\label{eq:tslSimple}
    \end{subfigure}%
    ~ ~
    \begin{subfigure}[t]{0.30\textwidth}
      \centering
      $ \begin{array}{c}
          \LTLglobally \; \neg ( \name{y\_to\_y} \, \wedge \, \name{x\_to\_y}) \\[0.2em]
          \wedge \ \LTLglobally \; (\name{y\_to\_y} \, \vee \, \name{x\_to\_y}) \\[0.2em]
          \wedge \ \LTLeventually \, \name{p\_x} \, \impl \
          \LTLeventually \, \name{p\_y}
        \end{array} $
\caption{initial approximation}
\label{eq:ltlSimple}
    \end{subfigure}%
    ~\;
    \begin{subfigure}[t]{0.35\textwidth}
      \centering
      \vspace{-1em}
  \begin{tikzpicture}[->,>=stealth',shorten >=1pt,auto,node distance=2.8cm,initial text=]
    \tikzstyle{every state}=[fill=blue!20,draw,text=white,minimum size=1.5em]
    \node[initial,state] (A)                    {};
    \path (A) edge [loop right] node {$\name{p\_x} \; \wedge \; \neg \, \name{p\_y}$} (A);
  \end{tikzpicture}
  \vspace{1.1em}
\caption{spurious counter-strategy}
\label{eq:tslSimpleSoln}
    \end{subfigure}
    \caption{
      A TSL specification~(a) with input~\name{x} and cell~\name{y} that is realizable. A winning strategy is to save~\name{x} to \name{y} as soon as $ \name{p}(\name{x}) $ is satisfied. However, the initial approximation~(b), that is passed to an LTL synthesis solver, is unrealizable, as proven through the counter-strategy~(c) returned by the LTL solver.}
    \label{fig:approx}
\end{figure*}

\noindent \textit{Example} As an example, we present a simple TSL specification in \cref{eq:tslSimple}.
The specification asserts that the environment provides an input~\name{x} for which the predicate~$ \name{p}~\name{x} $ will be satisfied eventually. The system must guarantee that eventually $ \name{p}~\name{y} $ holds.
According to the semantics of TSL the formula is realizable. The system can take the value of $ \name{x} $ when $ \name{p}~\name{x} $ is true and save it to $ \name{y} $, thus guaranteeing that $ \name{p}~\name{y} $ is satisfied eventually.
This is in contrast to LTL, which has no semantics for pure functions - taking the evaluation of $ \name{p}~\name{y} $ as an environmentally controlled value that does not need to obey the consistency of a pure function.

\medskip

\noindent \textit{Refining the LTL Approximation} It is possible that the LTL solver returns a counter-strategy for the environment although the original TSL specification is realizable.
We call such a counter-strategy \textit{spurious} as it exploits the additional freedom of LTL to violate the purity of predicates as made possible by the underapproximation.
Formally, a counter-strategy is an infinite tree $ \pi \from \comps^{*} \to 2^{\pterms} $, which provides predicate evaluations in response to possible update assignments of function terms~$ \fterm \in \fterms $ to outputs~$ \name{o} \in \onames $.
W.l.o.g.\ we can assume that $ \onames $, $ \fterms $ and $ \pterms $ are finite, as they can always be restricted to the outputs and terms that appear in the formula.
A counter-strategy is spurious, iff there is a branch~$ \branch{\pi}{\comp} $ for some computation~$ \comp \in \comps^{\omega} $, for which the strategy chooses an inconsistent evaluation of two equal predicate terms at different points in time, i.e.,
\begin{equation*}
  \begin{array}{l}
    \exists \comp \in \comps^{\omega}. \ \exists t, t' \in \dtime. \ \exists \pterm \in \pterms. \\[0.2em]
    \qquad \pterm \in \pi(\comp(0)\comp(1)\ldots\comp(t-1)) \, \wedge \, \pterm \notin \pi(\comp(0)\comp(1)\ldots\comp(t'-1)) \ \wedge \\[0.2em]
    \qquad \forall \assign{\cdot} \from \fnames \to \functions. \ \eval(\comp, \branch{\pi}{\comp}, t, \pterm) \, = \, \eval(\comp, \branch{\pi}{\comp}, t', \pterm).
  \end{array}
\end{equation*}
Note that a non-spurious strategy can be inconsistent along multiple
branches. Due to the definition of realizability
the environment can choose function and
predicate assignments differently against every system strategy
accordingly.

By purity of predicates in TSL the environment is forced to
always return the same value for predicate evaluations on equal
values. However, this semantic property cannot be enforced implicitly
in LTL.  To resolve this issue we use the returned counter-strategy to
identify spurious behavior in order to strengthen the LTL
underapproximation with additional environment assumptions.
After adding the derived assumptions, we re-execute the LTL synthesizer to check whether the
added assumptions are sufficient in order to obtain a winning strategy
for the system.  If the solver still returns a spurious strategy, we
continue the loop in a CEGAR fashion until the set of added
assumptions is sufficiently complete.  However, if a non-spurious strategy is
returned, we have found a proof that the given
TSL specification is indeed unrealizable and terminate.

\goodbreak

\begin{algorithm}[t]
  \small
  \caption{Check-Spuriousness} \label{alg:spurious}
  \begin{algorithmic}[1]
    \Require{bound~$ b $, counter-strategy~$ \pi \from \comps^{*}\!\! \to \! 2^{\pterms} $ (finitely represented using $ m $ states)}

    \vspace{0.3em}

    \ForAll{$ v \in \comps^{m \cdot b},\, \pterm \in \pterms, \, t,t' \in \set{ 0,1,\ldots,m\cdot b - 1} $}
      \If{$ \evalid(v,\iota_{\name{id}},t,\pterm) \equiv \evalid(v,\iota_{\name{id}},t',\pterm) \wedge \mbox{\qquad} \qquad \qquad \qquad \qquad \qquad \qquad \qquad $ $ \mbox{\ }\hspace{2.5em} \pterm \in \pi(v_{0}\ldots v_{t-1}) \wedge \pterm \notin \pi(v_{0}\ldots v_{t'-1}) $}
      \State \quad $ w \gets \texttt{reduce}\,(v,\pterm,t,t') $
      \State \quad {\textbf{return} \ $ \LTLglobally \big(\! \bigwedge_{i=0}^{t-1} \LTLnext^{i}\! w_{i} \, \wedge \, \bigwedge_{i = 0}^{t'-1} \LTLnext^{i}\! w_{i} \,\rightarrow\, (\LTLnext^{t}\! \pterm \leftrightarrow \LTLnext^{t'} \!\! \pterm) \big) $}
      \EndIf
    \EndFor
    \State {\textbf{return} \ \texttt{``non-spurious''}}
  \end{algorithmic}
  \vspace{-0.2em}
\end{algorithm}

\cref{alg:spurious} shows how a returned counter-strategy~$ \pi $ is
checked for being spurious. To this end, it is sufficient to
check~$ \pi $ against system strategies bounded by the given
bound~$ b $, as we use bounded
synthesis~\cite{Schewe:2013}. Furthermore, we can assume w.l.o.g.\
that~$ \pi $ is given by a finite state representation, which is
always possible due to the finite model guarantees of LTL. Also note
that~$ \pi $, as it is returned by the LTL synthesizer, responses to
sequences of sets of updates~$ (2^{\utermsp})^{*} $. However, in our
case $ (2^{\utermsp})^{*} $ is an alternative
representation~of~$ \comps^{*} $, due to the additional constraints
added during the construction~of~$ \varphi_{\textit{LTL}} $.

The algorithm iterates over all possible
responses~$ v \in \comps^{m \cdot b} $ of the system up to depth
$ m \cdot b $. This is sufficient, since any deeper exploration would
result in a state repetition of the cross-product of the finite state
representation of~$ \pi $ and any system strategy bounded by~$ b
$. Hence, the same behaviour could also be generated by a smaller
sequence. At the same time, the algorithm iterates over
predicates~$ \pterm \in \pterms $ appearing in
$ \varphi_{\textit{TSL}} $ and times $ t $ and $ t' $ smaller
than~$ m \cdot b $. For each of these elements, spuriousness is
checked by comparing the output of~$ \pi $ for the evaluation of
$ \pterm $ at times~$ t $ and $ t' $, which should only
differ, if the inputs to the predicates are different as well. This
can only happen, if the passed input terms have been constructed
differently over the past. We check it by using the evaluation
function~$ \eta $ equipped with the identity assignment
$ \assign{\cdot}_{\texttt{id}} \from \fnames \to \fnames $, with
$ \assign{\name{f}}_{\texttt{id}} = \name{f} $ for all
$ \name{f} \in \fnames $, and the input sequence
$ \iota_{\texttt{id}} $, with
$ \iota_{\texttt{id}}(t)(\name{i}) = (t,\name{i}) $ for all
$ t \in \dtime$ and $ \name{i} \in \inames $, that always generates a
fresh input. Syntactic inequality of
$ \evalid(v,\iota_{\name{id}},t,\pterm) $ \linebreak and
$ \evalid(v,\iota_{\name{id}},t',\pterm) $ then is a sufficient
condition for the existence of an assignment
$ \assign{\cdot} \from \fterms \to \functions $, for which $ \pterm $
evaluates differently at times $ t $ and~$ t' $.

If spurious behaviour of~$ \pi $ could be found, then the revealing
response~$ v \in \comps^{*} $ is first simplified using
$ \texttt{reduce} $, which turns $ v $ back to a sequence of sets of
updates~$ w \in (2^{\utermsp})^{*} $ and removes updates that do not
affect the behavior of $ \pterm $ at the times $ t $ and $ t' $ to
accelerate the termination of the CEGAR loop. Afterwards, the
sequence~$ w $ is turned into a new assumption that prohibits the found
spurious behavior and, thus, further refines the LTL
underapproximation.

As an example of this process, reconsider the spurious
counter-strategy of \cref{eq:tslSimpleSoln}. Already after the first
system response~$ \upd{\name{y}}{\name{x}} $, the environment produces
an inconsistency by evaluating~$ \name{p} \ \name{x} $ and
$ \name{p} \ \name{y} $ differently. This is inconsistent, as the
cell~$ \name{y} $ holds the same value at time~$ t = 1 $ as the
input~$ \name{x} $ at time~$ t = 0 $. Using \cref{alg:spurious} we generate
the new
assumption~$ \LTLglobally (\upd{\name{y}}{\name{x}} \impl (\name{p} \
\name{x} \leftrightarrow \LTLnext \name{p} \ \name{y})) $. After adding this
strengthening the LTL synthesizer returns a realizability result.

\medskip

\goodbreak

\noindent \textit{Undecidability}
Although we can approximate the semantics of TSL with LTL, there are
TSL formulas that cannot be expressed as LTL formulas of finite
size.
\begin{theorem}\label{thm:decidability}
  The realizability problem of $ \TSL $ is undecidable.
\end{theorem}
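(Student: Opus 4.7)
The plan is to prove undecidability by reduction from the halting problem of two-counter Minsky machines, which is classically non-recursive. Given a Minsky machine $M$ with instructions drawn from $\{\iinc{r}, \icdec{r}{k}, \ihalt\}$ and starting configuration in which both counters hold $0$, I would build a TSL formula $\varphi_{M}$ such that $\varphi_{M}$ is realizable if and only if $M$ does not halt. Since the set of non-halting two-counter machines is not recursive, this would establish \theoremref{thm:decidability}.

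The encoding represents the two counters as output signals $\siA$ and $\siB$, and the program counter by one output signal per instruction. Counter values are tracked as syntactic terms over a constant name $\sX$ and a unary function name $\incf$, so that after $n$ increments the counter stores the term $\incf^{n}(\sX())$. A unary predicate $\cmpp$ plays the role of a zero-test on counter values. The formula $\varphi_{M}$ is a large conjunction, of roughly the shape
\begin{equation*}
  \LTLglobally \Big(\, \bigwedge_{k} \big(\, \text{at}(I_{k}) \;\impl\; \text{effect}(I_{k}) \,\big) \,\Big) \, ,
\end{equation*}
where $\text{effect}(\iinc{r})$ forces the update $\upd{r}{\incf(r)}$ together with a transition to the program-counter signal of the successor instruction; $\text{effect}(\icdec{r}{k})$ branches on $\cmpp(r)$, either jumping to instruction $I_{k}$ or performing a ``decrement'' and advancing; and $\text{effect}(\ihalt)$ is conjuncted with $\emph{false}$, so that reaching $\ihalt$ renders $\varphi_{M}$ unsatisfiable.

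The main obstacle is twofold, and both parts arise from the universal quantification over $\assign{\cdot}$. First, since $\cmpp$ is uninterpreted, an adversarial assignment could make zero-tests disagree with the intended semantics, so $\varphi_{M}$ must additionally include, as antecedents under a global implication, sufficient consistency constraints on $\cmpp$ to pin down its behaviour on precisely the syntactic terms that occur along a simulating computation. Second, TSL updates can only \emph{build} terms via $\incf$ and never tear them down, so the decrement effect has no direct syntactic realisation. The standard workaround is to encode each counter as a pair of monotonically growing signals $r^{+}, r^{-}$ whose ``difference'' represents the true counter value, turning every decrement into an $\incf$-increment on $r^{-}$; the zero-test is then a predicate over the pair. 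Showing that this paired encoding is faithful under every $\assign{\cdot}$ consistent with the imposed constraints is the most delicate step, and the point where the reduction is least mechanical.

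Assuming the encoding is correct, the equivalence splits into two routine directions: the computation $\comp$ obtained by syntactically mirroring the (assumed infinite) run of $M$ realises $\varphi_{M}$ under every assignment that respects the consistency constraints, while any realising computation must simulate $M$ faithfully up to the first occurrence of $\ihalt$, at which point the conjoined $\emph{false}$ is triggered and realisability fails. Hence $\varphi_{M}$ is realizable exactly when $M$ diverges, and decidability of TSL realizability would contradict the undecidability of the Minsky halting problem.
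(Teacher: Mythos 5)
Your reduction has a genuine gap, and it sits exactly where you yourself locate it: the zero-test. The paper avoids this obstacle entirely by reducing from PCP rather than from Minsky machines. In the paper's proof, the two output signals $\name{A}$ and $\name{B}$ accumulate stacked function applications $\mu(w_j,\cdot)$ and $\mu(v_j,\cdot)$, the choice of index $j$ at each step is made \emph{existentially} by the computation itself, and the only test ever performed is the \emph{positive} check $\LTLfinally\,(\name{p}(\name{A}) \leftrightarrow \name{p}(\name{B}))$: under the universal quantification over $\assign{\cdot}$, this holds if and only if the accumulated terms are equal, since any difference would be detected by some predicate. No step of the paper's simulation ever branches on the value of an uninterpreted predicate.

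Your Minsky simulation, by contrast, must branch: the effect of $\icdec{r}{k}$ differs according to whether the counter is zero, and this is where the construction cannot be completed. Recall that a computation $\comp \in \comps^{\omega}$ is a fixed sequence of computation steps; it cannot read $\cmpp$. So for the direction ``$M$ diverges $\Rightarrow \varphi_{M}$ realizable'' you must exhibit one fixed $\comp$ (mirroring the actual run of $M$) such that for \emph{every} assignment admitted by your consistency antecedents, the zero-test predicate agrees with the actual zero-status of the counter at every step; otherwise the guarantee demanding agreement between the branch taken and the predicate value is falsified. But the antecedents can constrain $\cmpp$ (or the equality predicate over the pair $r^{+},r^{-}$) only through ground instances built from terms the computation actually holds, propagated step by step. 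Detecting a zero-crossing, i.e., that a decrement turns difference $1$ into difference $0$ rather than difference $n+1$ into $n$, requires knowing the exact difference, which is unbounded along runs of a Minsky machine (boundedness would make halting decidable). With finitely many predicate names and bounded term depth, an adversarial assignment can satisfy every expressible consistency axiom yet report the wrong value at some crossing, so the honest computation does not realize $\varphi_{M}$. The obvious repair --- conditioning the antecedent on the computation's own branch choice --- destroys the other direction: a cheating computation that takes wrong branches is then excused precisely by the assignments that endorse its choices, making $\varphi_{M}$ realizable even when $M$ halts. This negative-test obstacle is not incidental; it is the reason the $\exists\comp\,\forall\assign{\cdot}$ semantics of \TSL steers the paper toward a problem like PCP that needs only a single positive equality check.
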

\begin{proof}
  We reduce an instance of the Post Correspondence
  Problem~(PCP)~\cite{post1946}, consisting of an alphabet~$ \Sigma $
  and sequences
  $ w_{0}w_{1}\ldots w_{n}, v_{0}v_{1}\ldots v_{n} \in \Sigma^{*} $,
  to the realizability of a $ \TSL $ formula~$ \varphi $. To this end,
  we fix some unary predicate~$ \name{p} \in \pnames $, a~unary
  function $ \name{f} \in \fnames $ for every alphabet symbol
  $ f \in \Sigma $, and some $ 0 $-nary
  function~$ \name{X} \in \fnames $. The system has no
  inputs~$ \inames $, but two outputs $ \name{A} \in \onames $ and
  $ \name{B} \in \onames $.

  Initially, we assign the signals $ \name{A} $ and $ \name{B} $ the
  constant value~$ \name{X} $. From then on, we non-deterministically
  pick pairs $ (w_{j},v_{j}) $ in every time step, as provided by the
  PCP instance, where every $ w_{j} $ and $ v_{j} $ is represented as
  a stacked composition of the corresponding alphabet functions. Our
  choice is stored in the signals~$ \name{A} $ and $ \name{B} $ for
  $ w_{j} $ and $ v_{j} $, respectively. Finally, we check that the
  sequences of function applications, constructed over time, are equal
  at some point, using the eventually operator~$ \LTLfinally $ and the
  universally quantified predicate~$ \name{p} $ to check for equality.
\end{proof}
\noindent A more detailed version of the proof can be found in
\cref{proof:decidability}. Also note that no inputs are used by
the proof, which additionally shows that the \mbox{``satisfiability''} problem of
\TSL is undecidable as well.

\section{TSL Synthesis}
\label{sec:synth}
Our synthesis framework provides a modular refinement process
to synthesize executables from $ \TSL $ specifications, as depicted
in \cref{fig:system}. The user initially provides a
$ \TSL $ specification over predicate and function terms.  At the end
of the procedure, the user receives an executable to control a
reactive system.

The first step of our method answers the synthesis question of TSL: if
the specification is realizable, then a control flow model is
returned.  To this end, an intermediate translation to LTL is used,
utilizing an LTL synthesis solver that produces circuits in the AIGER
format. If the specification is realizable, the resulting control flow
model is turned into Haskell code, which is implemented as an
independent Haskell module. The user has the choice between two
different targets: a module built on Arrows, which is compatible with
any Arrowized FRP library, or a module built on Applicative, which
supports Applicative FRP \mbox{libraries}. Our procedure generates a single
Haskell module per TSL specification. This makes naturally decomposing
a project according to individual tasks possible. Each module provides
a single component, which is parameterized by their initial state and
the pure function and predicate transformations. As soon as these are
provided as part of the surrounding project context, a final
executable can be generated by compiling the Haskell code.

An important feature of our synthesis approach is that implementations
for the terms used in the specification are only required after
synthesis.  This allows the user to explore several possible
specifications before deciding on any term implementations.

\paragraph{Control Flow Model} The first step of our approach is the
synthesis of a \textit{Control Flow Model}~$ \cfm $ (CFM) from the
given $ \TSL $ specification~$ \varphi $, which provides us with a
uniform representation of the control flow structure of our final
program.

\noindent Formally, a CFM~$ \cfm $ is a tuple
$ \cfm = (\inames, \onames, \cells, \vertices, \labeling,
\dependencies), $ where $ \inames $ is a finite set of inputs,
$ \onames $ is a finite set of outputs, $ \cells $ is a finite set of
cells, $ \vertices $ is a finite set of vertices,
$ \labeling \from \vertices \to \fnames $ assigns a
vertex a function~$ \name{f} \in \fnames $ or a
predicate~$ \name{p} \in \pnames $, and
\begin{equation*}
  \dependencies \from (\onames \cup \cells \cup \vertices) \times
  \nats \to (\inames \cup \cells \cup \vertices \cup \set{ \bot })
\end{equation*}
is a dependency relation that relates every output, cell, and
vertex of the CFM with $ n \in \nats $ arguments, which are either
inputs, cells, or vertices. Outputs and
cells~$ \name{s} \in \onames \cup \cells $ always have only a single
argument, i.e., $ \delta(s, 0) \not\equiv \bot $ and
\mbox{$ \forall m > 0 .\ \delta(\name{s}, m) \equiv \bot $}, while for
vertices~$ x \in \vertices $ the number of arguments $ n \in \nats $
align with the arity of the assigned function or predicate
$ \labeling(x) $, i.e.,
$ \forall m \in \nats .\ \delta(s, m) \equiv \bot \leftrightarrow m >
n $. A CFM is valid if it does not contain circular dependencies,
i.e., on every cycle induced by $ \delta $ there must lie at least a
single cell. We only consider valid CFMs.

\newcommand{\arrow}[3]{
  \node[anchor=west,inner sep=0pt] at #1 (#2) {
    \begin{tikzpicture}[scale=0.8,inner sep=2pt]
      \node[arrow] (A) at (-3,0) {#3};      
      \fill[green!10] (A.south east) -- (A.south west) -- 
      (A.north west) -- (A.north east) 
      -- ($ (A.east) + (0.2,0) $) -- cycle;
      \node[arrow] (A) at (-3,0) {\phantom{#3}};      
      \draw (A.south east) -- (A.south west) -- 
      (A.north west) -- (A.north east) -- 
      ($ (A.east) + (0.2,0) $) -- cycle;
      \node[text depth=0pt] at (A) at (-3,0) {#3};            
    \end{tikzpicture}
  };
}
\begin{figure}[t]
  \centering

\begin{tikzpicture}[circuit logic US]
  \tikzset{ 
    arrow/.style={fill=green!10},      
    and/.style={and gate, inputs={nn}, point right,blue!80!black!50!white,fill},
    or/.style={or gate, inputs={nn}, point right,blue!70,fill},
    not/.style={not gate, point right, scale=0.5,black!50!blue, fill},
    sec/.style={fill, circle,inner sep=0.7pt}, 
  }
  
  \draw[drop shadow,fill=gray!10] (-5.5,-4.6) rectangle (4.25,2);

  \begin{scope}[circuit logic US,line width=0.4,scale=0.45,xshift=-81,yshift=55]
    \tikzset{
      and/.style={and gate, inputs={nn}, point right,blue!80!black!50!white,fill},
      or/.style={or gate, inputs={nn}, point right,blue!70,fill},
      not/.style={not gate, point right, scale=0.5,black!50!blue, fill},
      sec/.style={fill, circle,inner sep=0.7pt},
    }
    \draw[fill=yellow!10,rounded corners=3,fill,drop shadow,thin] (-2,-5.9) rectangle (7.8,2);

    \node at (-2,-2.6) (pause) {}; 
    \node at (-2,1.4) (cfg) {}; 
    \node at (-2,0.7) (play) {}; 
    \node at (-2,-1.3) (resume) {};
    \node at (-2,-5.1) (leave) {};  
    \node at (-2,-5.3) (music) {}; 

    \node at (5.7,-5.9) (out1) {};
    \node at (6.5,-5.9) (out2) {};
    \node at (7.3,-5.9) (out3) {};

    \node at (7.7,0.9) (out4) {};
    \node at (7.7,1.4) (out5) {};

    \node[and] at (0.05,0.6) (a1) {};
    \node[or] at (0,-1.4) (o1) {};
    \node[or] at (0,-2.2) (o2) {};
    \node[or] at (0,-3) (o3) {};
    \node[not] at (1.1,-3) (n1) {};
    \node[not] at (1.1,-2.2) (n2) {};
    \node[not] at (1.1,0.2) (n3) {};
    \node[and] at (2.3,-1) (a2) {};
    \node[or] at (3.6,-0.5) (o4) {};
    \node[and] at (2.3,-3.7) (a3) {};
    \node[and] at (0.05,-5.2) (a4) {};
    \node[and] at (2.3,-4.5) (a5) {};
    \node[or] at (3.6,-4.1) (o5) {};
    \node[not] at (1.1,-5.2) (n4) {};
    \node[and] at (3.7,-3.1) (a6) {};
    \node[and] at (5.3,-1.8) (a7) {};
    \node[or] at (6.6,0.1) (o6) {};
    \node[or] at (5.2,0.9) (o7) {};
    \node[not] at (6.6,0.9) (n5) {};

    \draw (o1.output) ++ (right:0.2) node[sec] {} |- (n3.input);
    \draw (o1.output) -- ++ (right:0.2) |- (a3.input 1);
    \draw (o2.output) -- (n2.input);
    \draw (o3.output) -- (n1.input);
    \draw (n2.output) -- ++ (right:0.2) |- (a2.input 2);
    \draw (a2.output) -- ++ (right:0.2) |- (o4.input 2);
    \draw (a1.output) -| ($ (a2.output) + (0.2,1) $) |- (o4.input 1);
    \draw (a4.output)  ++ (right:0.25) node[sec] {} |- (a5.input 2);
    \draw (a4.output) -- (n4.input);
    \draw (a3.output) -- ++ (right:0.2) |- (o5.input 1);
    \draw (a5.output) ++ (right:0.2) |- (o5.input 2);
    \draw (a5.output) -- ++ (right:0.2) node[sec] {} |- ($ (a6.output) + (0.2,-1.7) $) |- (o7.input 2);
    \draw (n1.output) -- (a6.input 1);
    \draw (n3.output) ++ (right:0.8) node[sec] {} |- (o7.input 1);
    \draw (n3.output) -- (o6.input 1);
    \draw (n4.output) -- ++ (right:0.2) |- (a6.input 2);
    \draw (a6.output) -- ++ (right:0.5) node (A) {} |- (a7.input 2);
    \draw (a7.output) -- ++ (right:0.2) |- (o6.input 2);
    \draw (o7.output) -- (n5.input);

    \draw (o5.output) -| (out1.center);
    \draw (o4.output) -- ++ (right:2.4) |- (out2.center);
    \draw (o6.output) -- ++ (right:0.2) |- (out3.center);
    \draw (n5.output) -- (out4.center);
    \draw (o7.output) ++ (right:0.2) node[sec] {} |- (out5.center);

    \draw (pause.center) -- ++ (right:0.5) |- (o2.input 1);
    \draw (pause.center) ++ (right:0.5) node[sec] (T) {} |- node[sec] {} (o3.input 2);
    \draw (T.center |- o3.input 2) |- (a3.input 2);

    \draw (cfg.center) -- ++ (right:1.3) node[sec] (U) {} |- node[sec] {} (a1.input 2);
    \draw (U.center |- a1.input 2) |- node[sec] {} (o1.input 2);
    \draw (U.center |- o1.input 2) |- node[sec] {} (o2.input 2);
    \draw (U.center |- o2.input 2) |- (a5.input 1);
    \draw (cfg.center) -| ($ (A.center) + (0,4) $)  |- (a7.input 1);

    \draw (play.center) -- (a1.input 1);
    \draw (play.center) ++ (right:0.9) node[sec] {} |- (o3.input 1);

    \draw (resume.center) ++ (right:0.5) node[sec] {} |- (a2.input 1);
    \draw (resume.center) -- (o1.input 1);

    \draw (leave.center) -- ++ (right:0.5) |- (a4.input 1);

    \draw (music.center) -- (a4.input 2);
  \end{scope}
  
  \node at (-5,0) (Y) {};
  \node at (1,-2.1) (Z) {};

  \node at (-5.5,-2.6) (mpin) {};
  \node at (-5.5,0.05) (sys) {};
  \node at (-5.5,-4.2) (tr) {};
  \node at (tr.center |- cfg.center) (cin) {};
  \node at (4.25,0.45) (cout) {};
  \node at (4.25,-3.3) (mpout) {};

  \arrow{(Y.center |- cfg.center)}{nA}{$ \name{(== m}_{\name{0}} \!\name{)} $};
  \arrow{(-5,0.9)}{nB}{$ \name{playButton} $};
  \arrow{(Y.center |- resume.center)}{nC}{$ \name{resumeApp} $};
  \arrow{(Y.center |- pause.center)}{nD}{$ \name{pauseButton} $};
  \arrow{(-5,-0.8)}{nE}{$ \name{leaveApp} $};
  \arrow{(Y.center |- music.center)}{nF}{$ \name{musicPlaying} $};

  \arrow{(2.5,0.7)}{nG}{$ \name{m}_{\name{0}} $};
  \arrow{(2.5,0.2)}{nH}{$ \name{m}_{\name{1}} $};
  \arrow{(-2.75,-2.6)}{nI}{\tikz{\node[inner sep=3pt,minimum height=1em]{$ \name{pause} $};}};
  \arrow{(-4.5,-3.8)}{nJ}{$ \name{trackPos} $};

  \arrow{(-1.3,-4)}{nK}{\tikz{\node[inner sep=5pt,minimum height=2em]{$ \name{play} $};}};

  \draw[fill=blue!10] (3.3,0) rectangle (4,1.1);
  \node[circle,draw,fill=white] at (3.75,0.45) (S1) {};

  \draw[fill=blue!10] (1,-4.3) rectangle (2.3,-2.1);
  \node[circle,draw,fill=white] at (2,-3.3) (S2) {};

  \draw (nA.east) ++(left:0.01) -- (cfg.center);
  \draw (nB.east) ++(left:0.01) -- ++(right:0.4) |- (play.center);
  \draw (nC.east) ++(left:0.01) -- (resume.center);
  \draw (nD.east) ++(left:0.01) -- (pause.center);
  \draw (nE.east) ++(left:0.01) -- ++(right:0.9) |- (leave.center);
  \draw (nF.east) ++(left:0.01) -- (music.center);

  \draw (nG.east) ++(left:0.01) -- ++(right:0.4) node[xshift=-4,yshift=3] {\tiny 1} -- (S1);
  \draw (nH.east) ++(left:0.01) -- ++(right:0.4)  node[xshift=-4,yshift=3] {\tiny 2} -- (S1);

  \draw (out4.center) -| (3.5,1.1) node[below,yshift=2] {\tiny 1};
  \draw (out5.center) -| (3.8,1.1) node[below,yshift=2] {\tiny 2};

  \draw (nJ.east) ++(left:0.01) -- (nJ.east -| nK.west) -- ++(right:0.01);

  \draw (out1.center) -- (Z.center -| out1.center) node[yshift=-3] {\tiny 1};
  \draw (out2.center) -- (Z.center -| out2.center) node[yshift=-3] {\tiny 2};
  \draw (out3.center) -- (Z.center -| out3.center) node[yshift=-3] {\tiny 3};

  \draw (nI.east) ++(left:0.01) -- ++ (right:2.7) node (W) {} -- (S2);
  \draw (nK.east) ++(left:0.01) -- (nK.east -| W.center) -- (S2);

  \draw (mpin.center) -- ++ (right:0.25) node[sec] (Q) {} |- (S2);
  \draw (Q |- S2) node[sec] {} |- (nJ.west) -- ++(right:0.01);
  \draw (Q) -- (nI.west) -- ++(right:0.01);
  \draw (Q |- nI.west) |- (nF.west) -- ++(right:0.01);

  \draw (tr.center) -- (tr -| nK.west) -- ++(right:0.01);
  \draw (Q -| Z) node[xshift=3,yshift=3] {\tiny 1};
  \draw (S2 -| Z) node[xshift=3,yshift=3] {\tiny 3};
  \draw (nK.east  -| Z) node[xshift=3,yshift=3] {\tiny 2};

  \draw (sys.center) -- ++ (right:0.25) node[sec] (L) {} |- (nD.west) -- ++(right:0.01);
  \draw (L.center) |- (nC.west) -- ++(right:0.01);
  \draw (L.center |- nD) node[sec] {} |- (nE.west) -- ++(right:0.01);
  \draw (L.center |- nC) node[sec] {} |- (nB.west) -- ++(right:0.01);

  \draw (cin.center) -- (nA.west) -- ++(right:0.01);

  \draw (S1) -- (cout.center);
  \draw (S2) -- (mpout.center);

  \draw[->,>=stealth,thick] (cout.center) -- ++(right:0.9) -- ++(up:2) -- ++(left:11.55) |- (cin.center);
  \draw[->,>=stealth,thick] (mpout.center) -- ++(right:0.9);

  \draw[<-,>=stealth,thick] (mpin.center) -- ++(left:0.8);
  \draw[<-,>=stealth,thick] (tr.center) -- ++(left:0.8);
  \draw[<-,>=stealth,thick] (sys.center) -- ++(left:0.8);
  
  \draw (cout.north east) node[xshift=-2,yshift=1,anchor=west] {\scalebox{0.8}{$ \name{Cell} $}};
  \draw (mpout.north east) node[xshift=-3,yshift=1,anchor=west] {\scalebox{0.8}{$ \name{Ctrl} $}};

  \draw (mpin.north west) node[xshift=2,yshift=1,anchor=east] {\scalebox{0.8}{$ \name{MP} $}};
  \draw (cin.north west) node[xshift=2,yshift=1,anchor=east] {\scalebox{0.8}{$ \name{Cell} $}};
  \draw (tr.north west) node[xshift=2,yshift=1,anchor=east] {\scalebox{0.8}{$ \name{Tr} $}};
  \draw (sys.north west) node[xshift=2,yshift=1,anchor=east] {\scalebox{0.8}{$ \name{Sys} $}};

  \draw[gray!80,fill=white,rounded corners=3] (2.5,-2.55) rectangle (4.1,-0.55);

  \node[and,scale=0.45,anchor=west] at (2.7,-0.8) {};
  \node at (3.25,-0.8) {$ \equiv $};
  \node[anchor=west] at (3.35,-0.8) {\scalebox{0.8}{\name{and}}};

  \node[or,scale=0.45,anchor=west] at (2.7,-1.3) {};
  \node at (3.25,-1.3) {$ \equiv $};
  \node[anchor=west] at (3.35,-1.3) {\scalebox{0.8}{\name{or}}};

  \node[not,scale=0.45,anchor=west] at (2.7,-1.8) {};
  \node at (3.1,-1.8) {$ \equiv $};
  \node[anchor=west] at (3.2,-1.8) {\scalebox{0.8}{\name{not}}};

  \node[circle,draw,fill=white,anchor=west,scale=0.9] at (2.6,-2.3) {};
  \node at (3.1,-2.3) {$ \equiv $};
  \node[anchor=west] at (3.2,-2.3) {\scalebox{0.8}{\name{mutex}}};
\end{tikzpicture}

\caption{Example CFM of the music player generated from a TSL
  specification.}
\label{fig:cfmexample}
\vspace{-1em}
\end{figure}

An example CFM for our music player of \cref{sec:motiv} is depicted in
\cref{fig:cfmexample}. Inputs~$ \inames $ come from the left
and outputs~$ \onames $ leave on the right. The example
contains a single cell~$ \name{c} \in \cells $, which holds the
stateful memory~\name{Cell}, introduced during synthesis for the
module. The green, arrow shaped boxes depict vertices~$ \vertices $,
which are labeled with functions and predicates names, according
to~$ \labeling $. For the Boolean decisions that define $\delta$, we use circuit symbols for
conjunction, disjunction, and negation. Boolean decisions are piped to
a multiplexer gate that selects the respective
update streams. This allows each update stream to be passed to an output stream if and only if the
respective Boolean trigger evaluates positively, while
our construction ensures mutual exclusion on the Boolean triggers. For
code generation, the logic gates are implemented using the corresponding dedicated Boolean functions.
After building a control structure, we assign semantics to functions
and predicates by providing implementations.  To this end, we use
Functional Reactive Programming (FRP).  Prior work has established
Causal Commutative Arrows (CCA) as an FRP language pattern equivalent
of a CFM~\cite{jfp/LiuCH11,liu2007plugging,yallop2016causal}.  CCAs
are an abstraction subsumed by other functional reactive programming
abstractions, such as Monads, Applicative and
Arrows~\cite{jfp/LiuCH11,lindley2011idioms}.
There are many FRP libraries using
Monads~\cite{elm,hudakFRAN,ploeg2015frpnow},
Applicative~\cite{reactivebanana,clash2015,helbling2016juniper,Reflex},
or Arrows~\cite{courtney2003yampa,murphy2016livefrp,perez2016yampa,UISF},
and since every Monad is also an Applicative and Applicative/Arrows both are universal design patterns, we can give uniform
translations to all of these libraries using translations to just Applicative
and Arrows. Both translations are possible due to the flexible notion of a CFM.

In the last step, the synthesized FRP program is compiled into an
executable, using the provided function and predicate
implementations. This step is not fixed to a single
compiler implementation, but in fact can use any FRP compiler (or
library) that supports a language abstraction at least as expressive as CCA.
For example, instead of creating an Android music player app, we could
target an FRP web interface~\cite{Reflex} to create an online music
player, or an embedded FRP library~\cite{helbling2016juniper} that
allows us to directly instantiate the player on a computationally more
restricted device. By using the strong core of CCA, we even can go
down the whole chain and directly implement the player in hardware,
which is for example possible with the C$ \lambda $aSH  compiler~\cite{clash2015}.
Note that we still need to give separate implementations for the functions and
predicates for each target. However, our specification and the
synthesized CFM always stay the same.

\section{Experimental Results}
\label{sec:eval}
To evaluate our synthesis procedure we implemented a tool that follows
the structure of \cref{fig:system}.  Our tool first encodes the given
\TSL specification in LTL and then refines it until an LTL solver
either produces a realizability result or returns a counter-strategy
that is non-spurious. For LTL synthesis we use the bounded synthesis
tool BoSy~\cite{bosy}. As soon as we get a realizing strategy, given
as a circuit, it is translated to a corresponding CFM. Then, we
generate the FRP program structure. Finally, after providing function
and predicate implementations the resulting program is compiled into
an executable.

\begin{table}[htbp]
\centering
\caption{Number of cells~$ |\cells_{\cfm}| $ and
  vertices~$ |V_{\cfm}|$ of the resulting CFM~$ \cfm $ and
  synthesis times for a collection of \TSL specifications~$ \varphi
  $. A * indicates that the benchmark additionally has an initial
  condition as part of the specification.}
\label{table:results}
\begin{tabular}{|l||c|c|c|c|c||c|c|c|c|}
\hline
\multicolumn{1}{|c||}{\multirow{2}{*}{\textsc{Benchmark} $ (\varphi) $}}
  & \multicolumn{1}{c|}{\multirow{2}{*}{\ \,$| \varphi |$\ \,}}
  & \multicolumn{1}{c|}{\multirow{2}{*}{\ \,$| \inames |$\ \,}}
  & \multicolumn{1}{c|}{\multirow{2}{*}{\ \,$| \onames |$\ \,}}
  & \multicolumn{1}{c|}{\multirow{2}{*}{\ \,$| \pnames |$\ \,}}
  & \multicolumn{1}{c||}{\multirow{2}{*}{\ \,$| \fnames |$\ \,}}
  & \multicolumn{1}{c|}{\multirow{2}{*}{\ $| \cells_{\cfm} |$\ }}
  & \multicolumn{1}{c|}{\multirow{2}{*}{\ $| V_{\cfm} |$\ }}
  & \multicolumn{1}{c|}{\raisebox{-2pt}{\ \textsc{Synthesis}\ }}
  \\
&&&&&&&&\raisebox{0pt}{\textsc{Time (s)}}\\
\hline \hline
\textbf{Button} &&&&&&&& \\
\ \ \ default & 7 & 1 & 2 & 1 &  3 & 3 & 8 & 0.364 \\
\hline
\textbf{Music App} &&&&&&&& \\
\ \ \ simple & 91 & 3 & 1 & 4 & 7 & 2 & 25 & 0.77 \\ 
\ \ \ system feedback & 103 & 3 & 1 & 5 & 8 & 2 & 31 & 0.572 \\ 
\ \ \ motivating example \mbox{\quad\ \,} & 87 & 3 & 1 & 5 & 8 & 2 & 70 & 1.783 \\ 
\hline
\textbf{FRPZoo} &&&&&&&& \\
\ \ \ scenario\hspace{0.5pt}$_{0} $ & 54 & 1 & 3 & 2 & 8 & 4 & 36 & 1.876 \\ 
\ \ \ scenario\hspace{0.5pt}$_{5} $ & 50 & 1 & 3 & 2 & 7 & 4 & 32 & 1.196 \\ 
\ \ \ scenario\hspace{0.5pt}$_{10} $ & 48 & 1 & 3 & 2 & 7 & 4 & 32 & 1.161 \\ 
\hline
\textbf{Escalator} &&&&&&&& \\
\ \ \ non-reactive & 8 & 0 & 1 & 0 & 1 & 2 & 4 & 0.370 \\ 
\ \ \ non-counting & 15 & 2 & 1 & 2 & 4 & 2 & 19 & 0.304 \\ 
\ \ \ counting & 34 & 2 & 2 & 3 & 7 & 3 & 23 & 0.527 \\ 
\ \ \ counting* & 43 & 2 & 2 & 3 & 8 & 4 & 43 & 0.621 \\ 
\ \ \ bidirectional & 111 & 2 & 2 & 5 & 10 & 3 & 214 & 4.555 \\ 
\ \ \ bidirectional* &\,124\,& 2 & 2 & 5 & 11 & 4 & 287 & 16.213 \\ 
\ \ \ smart & 45 & 2 & 1 & 2 & 4 & 4 & 159 & 24.016 \\ 
\hline
\textbf{Slider} &&&&&&&& \\
\ \ \ default & 50 & 1 &  1 & 2 & 4 & 2 & 15 & 0.664 \\ 
\ \ \ scored & 67 & 1 & 3 & 4 & 8 & 4 & 62 & 3.965 \\ 
\ \ \ delayed & 71 & 1 & 3 & 4 & 8 & 5 & 159 & 7.194 \\ 
\hline
  \textbf{Haskell-TORCS}  &&&&&&&& \\
\ \ \ simple & 40 & 5 & 3 & 2 & 16 & 4 & 37 & 0.680 \\
\ \ \ \textbf{advanced} &&&&&&&& \\[-0.2em]
\ \ \ \ \ \ gearing & 23 & 4 & 1 & 1 & 3 & 2 & 7 & 0.403 \\
\ \ \ \ \ \ accelerating & 15 & 2 & 2 & 2 & 6 & 3 & 11 & 0.391 \\
\ \ \ \ \ \ \textbf{steering} &&&&&&&& \\[-0.2em]
\ \ \ \ \ \ \ \ \ simple & 45 & 2 & 1 & 4 & 6 & 2 & 31 & 0.459 \\
\ \ \ \ \ \ \ \ \ improved & 100 & 2 & 2 & 4 & 10 & 3 & 26 & 1.347 \\
\ \ \ \ \ \ \ \ \ smart & 76 & 3 & 2 & 4 & 8 & 5 & 227 & 3.375 \\
\hline
\end{tabular}
\end{table}

\begin{table}[htbp]
\centering
\caption{Set of programs that use purity to keep one or two counters
  in range. Synthesis needs multiple refinements of the specification
  to proof realizability.}
\label{table:results2}
\begin{tabular}{|l||c|c|c|c|c||c|c|c|c|}
\hline
  \multicolumn{1}{|c||}{\multirow{2}{*}{\ \textsc{Benchmark} $ (\varphi) $\ }}
  & \multicolumn{1}{c|}{\multirow{2}{*}{\;$| \varphi |$\;}}
  & \multicolumn{1}{c|}{\multirow{2}{*}{\;$| \inames |$\;}}
  & \multicolumn{1}{c|}{\multirow{2}{*}{\;$| \onames |$\;}}
  & \multicolumn{1}{c|}{\multirow{2}{*}{\;$| \pnames |$\;}}
  & \multicolumn{1}{c||}{\multirow{2}{*}{\;$| \fnames |$\;}}

  & \multicolumn{1}{c|}{\multirow{2}{*}{\,$| \cells_{\cfm} |$\,}}
  & \multicolumn{1}{c|}{\multirow{2}{*}{\,$| V_{\cfm} |$\,}}
  & \multicolumn{1}{c|}{\multirow{2}{*}{\textsc{Refinements}}}
  & \multicolumn{1}{c|}{\raisebox{-2pt}{\;\textsc{Synthesis}\;}}
  \\
&&&&&&&&&\raisebox{0pt}{\textsc{Time (s)}}\\
\hline \hline
\ inrange-single  & 23 & 2  & 1 & 2 & 4 & 2 & 21 & 3 & 0.690 \\ %
\ inrange-two  & 51 & 3 &  3 & 4 & 7 & 4 & 440 & 6 & 173.132 \\ %
\ graphical-single  & 55 & 2 & 3 & 2 & 6 & 4 & 343 & 9 & 1767.948 \\ %
\ graphical-two  & 113 & 3 &  5 & 4 & 9 & -  & - & - & >\,10000 \\ %
\hline
\end{tabular}
\end{table}

To demonstrate the effectiveness of synthesizing \TSL, we applied our
tool to a collection of benchmarks from different application domains,
listed in \cref{table:results}.  Every benchmark class consists of
multiple specifications, addressing different features of \TSL.  We
created all specifications from scratch, where we took care that they
either relate to existing textual specifications, or real world
scenarios. A short description of each benchmark class is given in
\cref{apx:benchmarks}.

For every benchmark, we report the synthesis time and the size of the
synthesized CFM, split into the number of cells
($ | \cells_{\cfm} | $) and vertices ($ | V_{\cfm} | $) used.
The synthesized CFM may use more cells than the original TSL
specification if synthesis requires more memory in order to realize a
correct control flow.
The synthesis was executed on a quad-core Intel Xeon processor
(E3-1271 v3, 3.6GHz, 32 GB RAM, PC1600, ECC), running
Ubuntu 64bit LTS 16.04.

The experiments of \cref{table:results} show that TSL successfully lifts the applicability of
synthesis from the Boolean data domain to arbitrary data domains,
which allows for new applications that can utilize every level of
required abstraction. For the benchmarks, we could always find a
realizable system within a reasonable amount of time, where the
solutions found often required synthesized cells to realize
the underlying control flow behavior.

We also considered a preliminary set of benchmarks that require
multiple refinement steps to be synthesizable. An overview of the
results is given in \mbox{\cref{table:results2}}. The benchmarks are inspired
by examples of the Reactive Banana FRP library~\cite{reactivebanana}.
Here, purity of function and predicate applications must be utilized
by the system to ensure that the value of one or two counters never
goes out of range. Thereby, the system not only needs purity to verify
this condition, but also to take the correct decisions in the
resulting implementation to be synthesized.

\section{Related Work}
\label{sec:related}
Synthesizing reactive programs has been explored in domains such as imperative programs over finite domains~\cite{madhusudan2011synthesizing} and parallel execution strategies for sequential programs~\cite{conf/fmcad/BloemHKKAS14}.
There have also been many alternatives to LTL for specifying properties of reactive systems, such as $ \mu$-calculus~\cite{Kupferman:2000}, Signal Temporal Logic~\cite{hscc/sanjit15}, Ground Temporal Logic~\cite{cyrluk1994ground}, and coalgebraic logics~\cite{bonsangue2008coalgebraic}.
An example of a synthesis approach that integrates control and data is recent work on strategy synthesis for linear arithmetic games~\cite{Farzan:2017}.
While reactive synthesis has focused on the complex control aspects of reactive systems,
  deductive and inductive synthesis has been concerned with the data transformation aspects in non-reactive and sequential programs~\cite{vechevYY13,kuncak2010complete,osera2015type,solarLezama13,Feser:2015:SDS:2737924.2737977,Isil17}.
Abstraction-based approaches to reactive synthesis~\cite{Beyene:2014:CAS:2535838.2535860,Dimitrova2012,hsu2018multi,mallik2016compositional} can be seen as a link between deductive and reactive synthesis.
In terms of FRP, a Curry-Howard correspondence between LTL and FRP in a dependently typed language was discovered~\cite{plpv/Jeffrey12,jeltsch2012towards} and used to prove properties of FRP programs~\cite{Cave2014Fair,krishnaswami2013higher}.

\section{Conclusions}
\label{sec:conclusions}
We have introduced Temporal Stream Logic, which allows the user to
specify the control flow of a reactive program.  The logic cleanly
separates control from complex data, forming the foundation for our
procedure to synthesize FRP programs. By utilizing the purity of
function transformations our logic scales independently of the
complexity of the data to be handled. While we have shown that
scalablility comes at the cost of undecidability, we addressed this
issue by using a CEGAR loop, which lazily refines the
underapproximation until either a realizing system implementation or
an unrealizability proof is found.

TSL also provides the foundations for further extensions. For example,
a user may want to fix the semantics for a partial set of functions and
predicates to be utilized by the synthesis tool. Such additional
refinements can be implemented as part of a much richer \textit{TSL
  Modulo Theory} framework.

Our experiments indicate that TSL synthesis works well in practice and on
a wide range of programming applications.  In general, we hope the
applications of this new logic and approach to reactive synthesis will
stimulate further research into the scalable, real-world use of
temporal logics for synthesis.

\bibliographystyle{splncs}
\bibliography{biblio}

\newpage

\appendix
\section{Appendix}
\subsection{Specifying a Music Player in TSL}
\label[appendix]{apx:musicspec}

To demonstrate the simplicity of TSL, we illustrate the creation of a TSL specification for the music player Android app of \cref{sec:motiv}.
For the concrete implementation, we use the \texttt{MediaPlayer} class (\name{MP}) from the Android API, which provides functions to pause and play, as well as a predicate to check if music is currently playing or not.
Specific to the Android OS, we receive signals when a user leaves or resumes the app.
Specific to our particular app, we also use two buttons in the UI that deliver signals when a user presses play or pause.

In contrast to the FRP model, the Android system uses callback structures and functions have side effects, such as playing the music.
Although the Android code is not using an FRP model, the theoretical foundations provided by FRP make embedding the synthesized control code a straightforward task.

Nevertheless, we need to consider that libraries used by the app and the surrounding android system carry their own state, which needs to be reflected within the model. To do so, we introduce a separate stream for each interface, which we assume to carry all the necessary state. We use the input stream~\name{Sys} to receive system events and button presses, while the input stream~$ \name{MP} $ provides us the interface to the \texttt{MediaPlayer} class. Updates to the \texttt{MediaPlayer} class are provided via its control interface~$ \name{Ctrl} $. This allows us to embed the synthesized program into any larger context to manipulate the music player. We cannot use the same name for input and output here, as we would miss state changes that could be made by a component plugged after this, which is a specific result of the clear modularization utilized within CCA. This is also why we do not need a separate system output here, as we only receive signals from the system. Finally, we also utilize the input stream~\name{Tr}, which provides us with the currently selected music track.

We partition the individual properties of the specification into two
categories: assumptions~$ A $ and guarantees~$ G $. The final
specification then results from their implication:
$ \bigwedge_{\vartheta \in A} \vartheta \rightarrow \bigwedge_{\psi
  \in B} \psi $. Our specification uses the following signals,
function terms and predicates:
\begin{align*}
  \inames &= \{\, \name{MP}, \name{Sys},  \name{Tr}\,\} \\
  \onames &= \{\, \name{Ctrl} \,\} \\
  \cells  &= \{\, \name{Ctrl} \,\} \\
  \pterms &= \{\, \name{musicPlaying},\, \name{pauseButton},\name{playButton}, \name{leaveApp}, \name{resumeApp} \,\} \\
  \fterms &= \pterms \cup \{\, \name{pause},\, \name{play},\name{trackPos} \,\} \\
\end{align*}
The function~$ \name{pause}~\name{m} $ pauses a played track on the passed music player stream~$ \name{m} $. The function~$\name{play}~\name{t}~\name{p} $ plays the selected track~$ \name{t} $ at the given track position~$ \name{p} $. The track position is carried by the music player stream, and can be extracted using the function~\name{trackPos}. In our model, \name{play} resets any state that is passed by $ \name{MP} $, while \name{pause} does not. The predicate \name{musicPlaying} checks whether music is playing on a music player stream, while the remaining predicates check the respective conditions from the surrounding android system.

\para{Assumptions}
We start with straightforward assumptions about the user interface.
In our model, we assume that the pause and play buttons cannot be pressed at the same time.
Also, from the Android OS behavior, we know the app cannot leave and resume at the same time.
\setcounter{equation}{0}
\renewcommand{\theequation}{A\arabic{equation}}
\begin{align}
  &\LTLglobally \neg \big( \name{playButton}~\name{Sys} \ \wedge \ \name{pauseButton}~\name{Sys}\big) \\
  &\LTLglobally \neg \big( \name{leaveApp}~\name{Sys} \phantom{\name{o}ri} \wedge \ \name{resumeApp}~\name{Sys}\big)
\end{align}
Again from the Android OS, once a user has left the app it is not possible to press the play or pause buttons until the user resumes the app.
\begin{equation}
  \begin{array}{l}\LTLglobally \Big( \name{leaveApp}~\name{Sys} \impl \\
    \quad \hspace{2em} \big(\big(\neg \name{playButton}~\name{Sys} \wedge \neg \name{pauseButton}~\name{Sys} \big) \LTLweakuntil \name{resumeApp}~\name{Sys}\big) \Big) \end{array}
\end{equation}
We use the $ \name{musicPlaying} $ predicate to monitor changes according to the play and pause actions.
Technically, $ \name{musicPlaying} $ is not necessary to specify correct behavior - we could remember the music playing state on a separate looping stream.
However, the method is provided by the \texttt{MediaPlayer} interface and it helps to improve readability, which is why we also use it for our the specification. To obtain a correct behavior, we do not need to mimic the full behavior of the method's implementation. It suffices to specify the behavior with respect to the pause and play actions.
\begin{align}
   \LTLglobally \Big(\hspace{3.4pt}U_{\name{play}} & \to \LTLnext \big(\phantom{\neg\,} \name{musicPlaying}~\name{MP}\, \LTLweakuntil \ U_{\name{pause}} \big) \Big) \\
  \LTLglobally \Big(U_{\name{pause}}                   & \to \LTLnext \big(\neg \, \name{musicPlaying}~\name{MP}\, \LTLweakuntil \ U_{\name{play}} \hspace{3.4pt}\big) \Big)
\end{align}
We use $ U_{\name{play}} $ as a shortcut for $ \upd{\name{Ctrl}}{\name{play}~\name{Tr}~(\name{trackPos}~\name{MP})} $ and $ U_{\name{pause}} $ as a shortcut for $ \upd{\name{Ctrl}}{\name{pause}~\name{MP}} $.

\para{Guarantees}
To specify the desired behavior of our app under the given assumptions, we define the following \TSL guarantees.
First, whenever the user presses one of the buttons in the app, the output signal has to take the corresponding action.
This desired behavior is what necessitates the assumption that both buttons cannot be pressed at the same time.
Removing the assumption that both buttons cannot be pressed at the same time would require to relax these guarantees.
\setcounter{equation}{0}
\renewcommand{\theequation}{G\arabic{equation}}
\begin{align}
  \LTLglobally \big( \phantom{a} \name{playButton}~\name{Sys} & \to U_{\name{play}} \hspace{3.4pt} \big)  \\
  \LTLglobally \big( \name{pauseButton}~\name{Sys} & \to U_{\name{pause}} \big)
\end{align}
The only way that the music can be paused is either by the user leaving the app or pressing pause.
In the latter case, the music should not start playing again until either the user resumes the app or presses play.
\begin{align}
  \LTLglobally \big(  \hspace{47.02pt}  U_{\name{pause}} &\to \big(\, \name{leaveApp}~\name{Sys} \ \lor \ \name{pauseButton}~\name{Sys} \big)\big) \\
  \LTLglobally \big( \phantom{\name{ton}}\name{leaveApp}~\name{Sys}  & \to \big(\neg \, U_{\name{play}} \, \LTLweakuntil \ \name{resumeApp}~\name{Sys}\big) \hspace{38pt} \big)\\
   \LTLglobally \big( \name{pauseButton}~\name{Sys}                   & \to \big(\neg \, U_{\name{play}} \, \LTLweakuntil \ \name{playButton}~\name{Sys} \big) \hspace{33.6pt} \big)
\end{align}
The last two parts of the specification were already introduced in the motivating example.
If the music is playing, the music should pause when leaving the app and start playing again when returning to the app.
\begin{align}
  \begin{split}
    & \LTLglobally \big( \name{musicPlaying}~\name{MP} \ \wedge \ \name{leaveApp}~\name{Sys} \to \, U_{\name{pause}} \big)
  \end{split} \\
  \begin{split} \label{eq:assoonas}
    &\LTLglobally \big( \name{musicPlaying}~\name{MP} \ \wedge \ \name{leaveApp}~\name{Sys} \to  \\
     & \qquad \qquad \big((\name{pauseButton}~\name{Sys} \lor U_{\name{play}})\, \mathop{\mathcal{A}}\, \name{resumeApp}~\name{Sys} \big) \big)
  \end{split}
\end{align}
Note that in contrast to \cref{sec:motiv} we also test for the $ \name{event}(\name{pauseButton}) $ in guarantee~\cref{eq:assoonas}. This is necessary, since the specification would be unrealizable otherwise, as revealed by our synthesis tool. Indeed, the user may be smart enough to immediately pause the music when resuming the app, in which case it should not be played again.

\subsection{Proof of \cref{thm:tsl2ltl}}
\label[appendix]{proof:tsl2ltl}

\begin{proof}
  Assume $ \varphi_{\textit{LTL}} $ is realizable. Then there exists a
  winning strategy $ \sigma \from (2^{\pterms})^{+} \to 2^{\utermsp} $
  for the system player in the underlying LTL realizability
  game. Assume for the sake of contradiction that
  $ \varphi_{\textit{TSL}} $ is not realizable. Then there exists in
  input~$ \iota \in \inputs^{\omega} $ and a function
  assignment~$ \assign{\cdot} \from \fterms \to \functions $ such that
  for all $ \kappa \from \inputs^{+} \to \comps $ we have that
  $ \branch{\kappa}{\iota}, \iota, \nsats \varphi_{\textit{TSL}} $. We
  inductively construct the input
  sequence~$ \nu \in (2^{\pterms})^{\omega} $ and
  the computation~$ \comp \in \comps^{\omega} $ over $ t \in \dtime $ as follows:
  \begin{equation*}
    \begin{array}{rcl}
      \nu(t) & = & \set{ \pterm \in \pterms \mid \eval(\comp, \iota, t, \pterm) } \\[0.5em]
      \comp(t)(\name{s}) & = & \fterm \text{, \quad  where } \fterm \text{ is the unique element} \\
             && \mbox{\ } \hspace{4em} \text{such that } \upd{\name{s}}{\fterm} \in \sigma(\nu(0) \nu(1) \ldots \nu(t))
    \end{array}
  \end{equation*}
  Note that $ \fterm $ must be unique, due to the additional constraint
  \begin{equation*}
    \LTLglobally \Big ( \bigwedge_{\name{s} \in \onames} \, \bigvee_{\term \in \utermsp^{\hspace{0.5pt}\name{s}}}
    \big( \term \; \wedge \bigwedge_{\term' \in
      \utermsp^{\hspace{0.5pt}\name{s}} \setminus
      \set{ \term }} \neg \, \term' \big)  \Big)
  \end{equation*}
  introduced in the underapproximation. Furthermore, also note that
  $ \nu $ and $ \comp $ are well-defined, since
  $ \eval(\comp, \iota, t, \pterm) $ only considers values of
  $ \comp $ at previous times~\mbox{$ t' < t $}.
  Since~$ \varphi_{\textit{LTL}} $ is realizable, we have that
  $ \branch{\sigma}{\nu}, \nu \vDash \varphi_{\textit{LTL}} $\footnote{For a
    suitable definition for the semantics of $ \vDash $ consider for
    example~\cite{Schewe:2013}.}, but at the same time
  \mbox{$ \comp, \iota \nsats \varphi_{\textit{TSL}} $} by the
  unrealizability of $ \varphi_{\textit{TSL}} $. We show that this is
  contradictory via a structural induction over the structure
  of~$ \varphi_{\textit{TSL}} $ for all $ t \in \dtime $:
  \begin{itemize}

  \item Case $ \varphi_{\textit{TSL}} = \pterm $:
    \begin{align*}
      & \comp, \iota, t \sats \pterm \\ \Leftrightarrow~ &
      \eval(\comp, \iota, t, \pterm) \\ \Leftrightarrow~ &
      \pterm \in \nu(t) \\ \Leftrightarrow~ &
      \branch{\sigma}{\nu}, \nu, t \vDash \textsc{SyntacticConversion}(\pterm)
    \end{align*}

  \item Case $ \varphi_{\textit{TSL}} = \upd{\name{s}}{\fterm} $:
    \begin{align*}
      & \comp, \iota, t \sats \upd{\name{s}}{\fterm} \\ \Leftrightarrow~&
      \comp(t)(\name{s}) = \fterm \\ \Leftrightarrow~&
      \upd{\name{s}}{\fterm} \in \sigma(\nu(0) \nu(1) \ldots \nu(t)) \\ \Leftrightarrow~&
      \branch{\sigma}{\nu}, \nu, t \vDash \textsc{SyntacticConversion}(\upd{\name{s}}{\fterm})
    \end{align*}

  \item Case $ \varphi_{\textit{TSL}} = \neg \psi $:
    \begin{align*}
      & \comp, \iota, t \sats \neg \psi \\ \Leftrightarrow~ &
      \comp, \iota, t \nsats \psi \\ \overset{IH}{\Leftrightarrow}~ &
      \branch{\sigma}{\nu}, \nu, t \nvDash \textsc{SyntacticConversion}(\psi) \\ \Leftrightarrow~&
      \branch{\sigma}{\nu}, \nu, t \vDash \textsc{SyntacticConversion}(\neg \psi)
    \end{align*}

  \item Case $ \varphi_{\textit{TSL}} = \vartheta \wedge \psi $:
    \begin{align*}
      & \comp, \iota, t \sats \vartheta \wedge \psi \\ \Leftrightarrow~ &
      \comp, \iota, t \sats \vartheta \ \wedge \ \comp, \iota, t \sats \psi \\ \overset{IH}{\Leftrightarrow}~ &
      \branch{\sigma}{\nu}, \nu, t \vDash \textsc{SyntacticConversion}(\vartheta) \ \wedge \branch{\sigma}{\nu}, \nu, t \vDash \textsc{SyntacticConversion}(\psi) \\ \Leftrightarrow~&
      \branch{\sigma}{\nu}, \nu, t \vDash \textsc{SyntacticConversion}(\vartheta \wedge \psi)
    \end{align*}

  \item Case $ \varphi_{\textit{TSL}} = \LTLnext \psi $:
    \begin{align*}
      & \comp, \iota, t \sats \LTLnext \psi \\ \Leftrightarrow~ &
      \comp, \iota, t+1 \sats \psi \\ \overset{IH}{\Leftrightarrow}~ &
      \branch{\sigma}{\nu}, \nu, t+1 \vDash \textsc{SyntacticConversion}(\psi) \\ \Leftrightarrow~&
      \branch{\sigma}{\nu}, \nu, t \vDash \textsc{SyntacticConversion}(\LTLnext \psi)
    \end{align*}

  \item Case $ \varphi_{\textit{TSL}} = \vartheta \LTLuntil \psi $:
    \begin{align*}
      & \comp, \iota, t \sats \vartheta \LTLuntil \psi \\ \Leftrightarrow~ &
\exists t'' \geq t. \ \
         \forall t \leq t' < t''. \ \ \comp, \iota, t' \sats \vartheta \ \,
         \wedge \ \, \comp, \iota, t'' \sats \psi                                                                             \\ \overset{IH}{\Leftrightarrow}~ &
\exists t'' \geq t. \ \
         \forall t \leq t' < t''. \\ & \mbox{\qquad} \branch{\sigma}{\nu}, \nu, t' \vDash \textsc{SyntacticConversion}(\vartheta) \ \,
         \wedge \\ & \mbox{\qquad} \branch{\sigma}{\nu}, \nu, t'' \vDash \textsc{SyntacticConversion}(\psi)                                                                             \\ \Leftrightarrow~ &
      \branch{\sigma}{\nu}, \nu, t \vDash \textsc{SyntacticConversion}(\vartheta \LTLuntil \psi) \hfill \qedhere
    \end{align*}

  \end{itemize}
\end{proof}

\subsection{Extended proof of \cref{thm:decidability}}
\label[appendix]{proof:decidability}

\begin{proof}
  We give a reduction from the Post Correspondence Problem~(PCP).
  Given two finite lists $ w_{0}w_{1}\ldots w_{n} $ and
  $ v_{0}v_{1}\ldots v_{n} $ of equal length, each containing $ n $ finite
  words over some finite alphabet~$ \Sigma $, is there some finite
  sequence $ i_{0}i_{1}\ldots i_{k} \in \nats^{*} $ such that
  $ w_{i_{0}}w_{i_{1}}\ldots w_{i_{k}} = v_{i_{0}} v_{i_{1}} \ldots
  v_{i_{k}} $. The problem is undecidable~\cite{post1946}.
  We reduce PCP to the realizability question of TSL, where we
  translate an arbitrary instance of PCP to a TSL formula~$ \varphi $,
  which is realizable if and only if there is a solution to the PCP
  instance. To this end, let $ n \in \nats $,
  $ w_{0}w_{1}\ldots w_{n} \in \Sigma^{*} $ and
  $ v_{0}v_{1}\ldots v_{n} \in \Sigma^{*} $ be given.  We fix
  $ \pnames = \set{ \name{p} } $ for some unary predicate~$ p $,
  \mbox{$ \fnames = \pnames \cup \Sigma \cup \set{ \name{X} } $}, where every
  $ \name{f} \in \Sigma $ corresponds to a unary function and
  $ \name{X} $ corresponds to a $ 0 $-nary function,
  $ \inames = \emptyset $, $ \onames = \set{ \name{A}, \name{B} } $, and
  $ \cells = \onames $. We define~$ \varphi $ via:
  \begin{equation*}
    \begin{array}{l}
    \varphi =  \Big( \, \upd{\name{A}}{\const{X}} \ \wedge \ \upd{\name{B}}{\const{X}} \, \Big)
    \ \wedge \\[0.2em] \phantom{\varphi \ =} \LTLnext\, \LTLglobally
    \Big( \bigvee_{j=0}^{n} \big( \upd{\name{A}}{\mu(w_{j},\name{A})} \, \wedge \,
    \upd{\name{B}}{\mu(v_{j},\name{B})} \big) \Big) \ \wedge  \\[0.2em]
      \phantom{\varphi \ =} \LTLnext \, \LTLnext \, \LTLfinally \Big( \, \name{p}~\name{A}
    \, \leftrightarrow \, \name{p}~\name{B} \, \Big)
    \end{array}
  \end{equation*}
  where
  $ \mu(x_{0}x_{1}\ldots x_{m},s) = x_{0}~(x_{1}~(\ldots (x_{m}~s) \ldots
  )) $.

  Intuitively, we first assign the signals $ \name{A} $ and
  $ \name{B} $ a constant base value. Then, from the next time step
  on, we have to pick pairs $ (w_{j},v_{j}) $ in every time step. Our
  choice is stored in the signals~$ \name{A} $ and $ \name{B} $,
  respectively. Finally, we check that the constructed sequences of
  function applications are equal at some point in time, where we use the
  universally quantified predicate~$ \name{p} $ to check for equality.

  \medskip

  \noindent The TSL formula~$ \varphi $ is realizable if and only if
  there is an index sequence~$ i_{0}i_{1}\ldots i_{k} $ such that
  $ w_{i_{0}}w_{i_{1}}\ldots w_{i_{k}} = v_{i_{0}} v_{i_{1}} \ldots
  v_{i_{k}} $:

  \smallskip

  ``$ \Rightarrow $'': Assume that $ \varphi $ is realizable, i.e.,
  there is some strategy~$ \sigma \from \inputs^{*} \to \comp $ that
  satisfies $ \varphi $ for
  $ \iota = \emptyset^{\hspace{0.5pt}\omega} $ and all possible
  choices of $ \assign{\cdot} \from \fnames \to \functions $. We fix
  $ \assign{\textit{init}_{\name{A}}} =
  \assign{\textit{init}_{\name{B}}} = \assign{\name{X}} = \varepsilon
  $ and $ \assign{x} \from \Sigma^{*} \to \Sigma^{*} $ with
  $ \assign{x}~w = wx $ for all $ w \in \Sigma^{*} $ and
  $ x \in \Sigma $. We do not fix any assignment to $ \name{p}
  $. Nevertheless, by the realizability of~$ \varphi $, there is a
  position~$ m > 1 $ at which
  $ \name{p}~\name{A} \leftrightarrow \name{p}~\name{A} $ is
  satisfied, independent of the predicate assigned to $ \name{p} $. We
  obtain that
  $ \eval(\branch{\sigma}{\iota},\iota,m,\name{A}) =
  \eval(\branch{\sigma}{\iota},\iota,m,\name{B}) $, \linebreak since
  otherwise there would be a predicate that detects the difference. As
  there is no other influence
  on~$ \branch{\sigma}{\iota} \in \comps^{\omega} $, depending on the
  choice of $ \name{p} $, we obtain that the semantics of
  $ \varrho_{\hspace{-1pt}\langle \hspace{-1pt}\cdot
    \hspace{-1pt}\rangle \hspace{-1pt}, \comp, \iota} \in
  \outputs^{\hspace{0.5pt}\omega} $
  \begin{eqnarray*}
   \varrho_{\hspace{-1pt}\langle \hspace{-1pt}\cdot
    \hspace{-1pt}\rangle \hspace{-1pt}, \comp, \iota} & = & \set{ \varrho_{\hspace{-1pt}\langle \hspace{-1pt}\cdot
    \hspace{-1pt}\rangle \hspace{-1pt}, \comp, \iota}(0)(\name{A}) \mapsto a_{0}, \,
                                                             \varrho_{\hspace{-1pt}\langle \hspace{-1pt}\cdot
    \hspace{-1pt}\rangle \hspace{-1pt}, \comp, \iota}(0)(\name{B}) \mapsto b_{0} } \\
    & & \set{ \varrho_{\hspace{-1pt}\langle \hspace{-1pt}\cdot
    \hspace{-1pt}\rangle \hspace{-1pt}, \comp, \iota}(1)(\name{A})
    \mapsto a_{1}, \, \varrho_{\hspace{-1pt}\langle \hspace{-1pt}\cdot
    \hspace{-1pt}\rangle \hspace{-1pt}, \comp, \iota}(1)(\name{B}) \mapsto b_{1} } \\
    & & \ldots
  \end{eqnarray*}
  are well defined (even without fixing $ \name{p} $) and induce the
  sequences $ a_{0}a_{1}\ldots \in \Sigma^{\omega} $ and
  $ b_{0}b_{1}\ldots \in \Sigma^{\omega} $. First, we observe that
  $ a_{0} = a_{1} = b_{0} = b_{1} = \varepsilon $ by the definition of
  $ \varphi $ and the choice of the initialization functions and
  $ \assign{\cdot} $. From the choice of $ \assign{\cdot} $, we also
  obtain that $ a_{t} = a_{t-1}w_{i_{t}} $ and
  $ b_{t} = b_{t-1}v_{i_{t}} $ for every $ t > 1 $ and some
  $ 0 \leq i_{t} \leq n $. A simple induction shows that every
  $ a_{t} = w_{i_{2}}w_{i_{3}}\ldots w_{i_{t}} $ and
  $ b_{t} = v_{i_{2}}v_{i_{3}} \ldots v_{i_{t}} $ for every $ t > 1
  $. It follows that
  $ w_{i_{2}}w_{i_{3}}\ldots w_{i_{m-1}} = v_{i_{2}}v_{i_{3}} \ldots
  v_{i_{m}} $ from equality of $ a_{m} $ and $ b_{m} $ at
  position~$ m $. This concludes this part of the proof, since we have
  that $ i_{2}i_{3}\ldots i_{m} $ is a solution for the PCP instance.

  \smallskip

  ``$ \Leftarrow $'': Now, assume that there is a
  solution~$ i_{0}i_{1}\ldots i_{k} $ to the PCP instance.
  Since~$ \inames = \emptyset $ it suffices to
  construct the computation~$ \comp = c_{0}c_{1}\ldots $ with
  $ c_{0}(\name{A}) = c_{0}(\name{B}) = \const{X} $ and for all
  $ t > 0 $:
  $ c_{t}(\name{A}) = \mu(w_{(t-1) \!\!\mod (k+1)}, \name{A}) $ and
  $ c_{t}(\name{B}) = \mu(w_{(t-1) \!\!\mod (k+1)}, \name{B}) $. It is
  straightforward to see that $ \comp $ satisfies
  $ \mbox{\ensuremath{\upd{\name{A}}{\name{X}}}} $,
  $ \mbox{\ensuremath{\upd{\name{B}}{\name{X}}}}$ and
  \begin{equation*}
     \LTLnext \, \LTLglobally \, \Big( \bigvee_{j=0}^{n}
    \big( \upd{\name{A}}{\mu(w_{j},\name{A})} \, \wedge \,
    \upd{\name{B}}{\mu(v_{j},\name{B})} \big) \Big).
  \end{equation*}
  Thus, it just remains to argue that $ \comp $ satisfies
  $ \LTLnext \, \LTLnext \, \LTLfinally ( \name{p}~\name{A} \,
  \leftrightarrow \, \name{p}~\name{B} ) $. To this end, let
  $ j_{0}j_{1}\ldots = (i_{0}i_{1}\ldots i_{k})^{\omega} $. Then a
  simple induction shows that
  \begin{equation*}
    \varrho_{\hspace{-1pt}\langle \hspace{-1pt}\cdot
      \hspace{-1pt}\rangle \hspace{-1pt}, \comp, \iota}
    (t)(\name{A}) =
    \eval(\comp,\iota,t,\mu(w_{j_{0}}w_{j_{1}}\ldots w_{j_{t-2}},\name{X}))
  \end{equation*}
  and
  $ \varrho_{\hspace{-1pt}\langle \hspace{-1pt}\cdot
    \hspace{-1pt}\rangle \hspace{-1pt}, \comp, \iota}(t)(\name{B}) =
  \eval(\comp,\iota, t,\mu(v_{j_{0}}v_{j_{1}}\ldots
  v_{j_{t-2}},\name{X})) $ for all $ t > 1 $,
  $ \iota = \emptyset^{\hspace{0.5pt}\omega} $, and all choices of
  $ \assign{\cdot} $. Now, consider that especially for $ t = k+2 $ we
  have that
  \begin{eqnarray*}
    w_{j_{0}}w_{j_{1}}\ldots w_{j_{t-2}} & =
    & w_{i_{0}}w_{i_{1}}\ldots  w_{i_{k}} \\
    & = & v_{i_{0}}v_{i_{1}}\ldots v_{i_{k}} \\
    & = &  v_{j_{0}}v_{j_{1}}\ldots v_{j_{t-2}}
  \end{eqnarray*}
  and, thus, also
  $  \varrho_{\hspace{-1pt}\langle \hspace{-1pt}\cdot
    \hspace{-1pt}\rangle \hspace{-1pt}, \comp, \iota}(k+2)(\name{A}) =  \varrho_{\hspace{-1pt}\langle \hspace{-1pt}\cdot
    \hspace{-1pt}\rangle \hspace{-1pt}, \comp, \iota}(k+2)(\name{B}) $,
  independent of the choice of $ \assign{\cdot} $. As this implies
  that  $ p(\varrho_{\hspace{-1pt}\langle \hspace{-1pt}\cdot
    \hspace{-1pt}\rangle \hspace{-1pt}, \comp, \iota}(k+2)(\name{A})) = p(\varrho_{\hspace{-1pt}\langle \hspace{-1pt}\cdot
    \hspace{-1pt}\rangle \hspace{-1pt}, \comp, \iota}(k+2)(\name{B}))
  $ for any unary predicate~$ p \in \predicates $, it proves that
  $ \comp $ satisfies
  $ \name{p}~\name{A} \, \leftrightarrow \, \name{p}~\name{B} $ at
  position $ k+2 $. Hence, the computation~$ \comp $ also satisfies
  $ \LTLnext \, \LTLnext \, \LTLfinally ( \name{p}~\name{A} \,
  \leftrightarrow \, \name{p}~\name{B} ) $, which concludes the
  proof.
\end{proof}
\noindent Note that we do not use any inputs to encode the PCP instance in the proof
above. As a consequence, it immediately follows that the
satisfiability problem of \TSL is undecidable as well.

\subsection{Other Benchmarks}
\label[appendix]{apx:benchmarks}

\subsubsection{Escalator}

As an illustrative example of the logic, we build up the
specification of an escalator controller. To this end, we first commit
to the following physical model.
To interact with the environment the escalator is equipped with a
motor, which moves its steps either up, down, or is turned off. To
observe the environment, it additionally can read from two sensors, at
the bottom and at the top, that reliably detect whenever somebody
enters or exists. To program the controller, we receive the inputs of
the sensors via two input signals: $ \name{bottom} $ and
$ \name{top} $. The steps are controlled via the output signal:
$ \name{steps} $.
We start with a simple, non-reactive version that continuously moves
up. This behavior is described via:
\begin{equation*}
  \LTLglobally \; \upd{\name{steps}}{\name{MOVEUP}()}
\end{equation*}
Note that we use $ \name{MOVEUP}() $ as a $ 0 $-nary function here, as
the command does not depend on any other given signal. However, for a
concrete system, we instead would replace it with the matching library
call (including possible static parameters), that passes the movement
command to the motor's driver. Nevertheless, reflecting these details here
would go beyond the scope of this illustration.

Next, let us make our escalator reactive, in the sense that it only
moves up, if there is actually somebody on it, using it.
A first specification looks as follows:
\begin{equation*}
  \begin{array}{l}
    \LTLglobally \Big( \big( (\name{enterEvent}~\name{bottom} \wedge \neg \name{exitEvent}~\name{top})
    \leftrightarrow \upd{\name{steps}}{\name{MOVEUP}()} \big) \wedge \mbox{\ } \\
    \phantom{\LTLglobally \Big(} \big( (\name{exitEvent}~\name{top} \wedge \neg \name{enterEvent}~\name{bottom})
    \leftrightarrow \upd{\name{steps}}{\name{STOP}()} \big) \Big)
  \end{array}
\end{equation*}
While this specification is realizable, it also turns out to be
incomplete. The elevator stops as soon as the first user leaves it at
the top, but there still may be other users behind.

A short inspection reveals that we cannot solve this problem via
purely reacting to the given inputs. Instead, we need to introduce a
counter (used as an internal signal), which keeps track of the users
on the escalator. Hence, we change the previous specification into
\begin{equation*}
  \begin{array}{l}
    \LTLglobally \Big( \big( (\name{enterEvent}~\name{bottom} \wedge \neg \name{exitEvent}~\name{top}) \leftrightarrow \upd{\name{users}}{\name{(+)}~\name{users}~\name{1}} \big) \wedge \mbox{\ } \\
    \phantom{\LTLglobally \Big(} \big( (\name{exitEvent}~\name{top} \wedge \neg \name{enterEvent}~\name{bottom})
    \leftrightarrow \upd{\name{users}}{ \name{(+)}~\name{users}~(\name{-1})} \big) \Big)
  \end{array}
\end{equation*}
With this change, it only remains to start and stop the escalator
whenever the number of users toggles between zero and non-zero:
\begin{equation*}
  \begin{array}{l}
    \LTLglobally \Big( \big( (\name{(==)}~\name{users}~\name{0} \ \wedge \ \LTLnext \neg \, (\name{(==)}~\name{users}~\name{0})) \leftrightarrow \LTLnext \, \upd{\name{steps}}{\name{MOVEUP}()} \big) \\
    \phantom{\LTLglobally \Big( } \big( ( \neg (\name{(==)}~\name{users}~\name{0}) \wedge \LTLnext \, (\name{(==)}~\name{users}~\name{0})) \leftrightarrow \LTLnext \; \upd{\name{steps}}{\name{STOP}()} \ \ \ \big) \Big)
  \end{array}
\end{equation*}
In conjunction with the previous part, we obtain a complete
description of a reactive escalator, which we can use to synthesize a
respective FPP program. The result not only satisfies the
specification, but also is immediately executable on the controller
(after compilation).

Note that we can easily extend our specification, by adding even more
properties. This is possible without changing the previous parts at
all.  For example, consider an alarm, that is activated whenever there
are too many users on the escalator.  Another variant would be a smart
version, which moves up and down, triggered by the entrance point of
the next user whenever the system is idle. For the sake of
illustration of these statements, we give a possible realization of
the second variant in the sequel.

We present the specification of a smart escalator, which is able to
move into both directions: up and down. Thereby, the direction is
determined by the entrance point of the first user entering the
escalator when it is empty. If the escalator already moves into a
specific direction, we ignore enter and exit events into the opposite
direction until it stopped again. The specification
$ \varphi = \LTLglobally \psi \rightarrow \bigwedge_{j=0}^{7}
\LTLglobally \vartheta_{j} $ consists of:
\begin{align*}
  \psi = \ & \neg \big(\name{(==)}~\name{steps}~\name{MOVEDOWN}() \ \wedge \ \name{(==)}~\name{steps}~\name{MOVEUP}()\big) \wedge \\
           & \neg \big(\name{enterEvent}~\name{top} \wedge  \name{exitEvent}~\name{top} \big) \wedge \mbox{\,}\\
           & \neg \big(\name{enterEvent}~\name{bottom} \wedge  \name{exitEvent}~\name{bottom} \big) \\[0.4em]
  \vartheta_{0} =\ & \big( (\name{enterEvent}~\name{bottom} \wedge \neg \, \name{exitEvent}~\name{top} \wedge \mbox{\ }
                     \neg (\name{(==)}~\name{steps}~\name{MOVEDOWN}()) ) \vee \mbox{ \ }\\
                   & \phantom{\big(} (\name{enterEvent}~\name{top} \wedge \neg \, \name{exitEvent}~\name{bottom} \wedge \mbox{\ }
                     \neg (\name{(==)}~\name{steps}~\name{MOVEUP}())) \big) \\
                   & \leftrightarrow \upd{\name{users}}{\name{(+)}~\name{users}~\name{1}}  \\
  \vartheta_{1} =\ & \big((\name{exitEvent}~\name{top} \wedge \neg \, \name{enterEvent}~\name{bottom} \wedge \mbox{\ }
                     \neg (\name{(==)}~\name{steps}~\name{MOVEDOWN}())) \vee \mbox{\ } \\
                   & \phantom{\big(}(\name{exitEvent}~\name{bottom} \wedge \neg \, \name{enterEvent}~\name{top} \wedge \mbox{\ }
                     \neg (\name{(==)}~\name{steps}~\name{MOVEUP}()))\big) \\
                   & \leftrightarrow \upd{\name{users}}{\name{(+)}~\name{users}~\name{(-1)}}  \\[0.4em]
  \vartheta_{2} =\ & \upd{\name{steps}}{\name{MOVEUP}()} \\
                   & \rightarrow \ \name{(==)}~\name{users}~\name{0} \; \wedge \; \name{enterEvent}~\name{bottom} \\[0.4em]
  \vartheta_{3} =\ & \upd{\name{steps}}{\name{MOVEDOWN}()} \\
                   & \rightarrow \ \name{(==)}~\name{users}~\name{0} \; \wedge \; \name{enterEvent}~\name{top} \\[0.4em]
  \vartheta_{4} =\ & \name{(==)}~\name{users}~\name{0} \; \wedge \; \name{enterEvent}~\name{bottom} \wedge \neg \name{enterEvent}~\name{top} \\
                   & \rightarrow \upd{\name{steps}}{\name{MOVEUP}()}  \\[0.4em]
  \vartheta_{5} =\ & \name{(==)}~\name{users}~\name{0} \; \wedge \; \name{enterEvent}~\name{top} \wedge \mbox{\ } \\
                   & \neg \name{enterEvent}~\name{bottom} \rightarrow \upd{\name{steps}}{\name{MOVEDOWN}()}  \\[0.4em]
  \vartheta_{6} =\ & \name{(==)}~\name{users}~\name{0} \; \wedge \; \name{enterEvent}~\name{top} \wedge \name{enterEvent}~\name{bottom} \\
                   & \rightarrow \big(\upd{\name{steps}}{\name{MOVEUP}()} \vee \upd{\name{steps}}{\name{MOVEDOWN}()} \big)  \\[0.4em]
  \vartheta_{7} =\ & \neg (\name{(==)}~\name{users}~\name{0}) \wedge \LTLnext \! \big(\name{(==)}~\name{users}~\name{0} \wedge
                     \neg \name{enterEvent}~\name{top} \wedge \neg \name{enterEvent}~\name{bottom}\big) \\
                   & \leftrightarrow \upd{\name{steps}}{\name{STOP}} \\
\end{align*}

\subsubsection{FRPZoo}

A special role is played by the
FRPZoo benchmark set, which refers to a standard online benchmark
suite, designed to compare FRP library language designs~\cite{FRPzoo}.
The specification separates between three different behaviors, given
as scenarios 0, 5, and 10. In every scenario, two buttons can be
clicked: a ``click count'' button, which counts the number of clicks,
and a ``toggle'' button, which toggles the enable/disable state of the ``click
count'' button. The value of the counter is displayed via some
output interface. The three scenarios differ with respect to the exact
conditions of when the counter is updated, reset or displayed.

\subsubsection{Haskell-TORCS}

The Haskell-TORCS benchmarks synthesize controllers for an autonomous vehicle.
Our specifications build upon the example of the Haskell-TORCS bindings for building FRP controllers~\cite{SCAV2017} in The Open Race Car Simulator (TORCS)~\cite{torcs}. The bindings are also used to run the synthesized implementations within the simulator.
Autonomous vehicles use limited sensor data about the environment (\eg the distance to nearest obstacle) to control actuators in the car (\eg the steering wheel).
The Haskell-TORCS set of benchmarks synthesize a controller from \TSL specifications where the sensors and actuators are the input and output signals respectively.
The functions used in the \TSL specifications for the Haskell-TORCS benchmarks, for example ``slowDown'' or ``turnLeft'', are implemented after the controller synthesis process.
In this way, we obtain a guarantee on the larger behavior of the system, while still allowing numerically sensitive, data level manipulations, to be optimized as required by the application.

The first ``simple'' Haskell-TORCS controller combines simple functions without states.
The ``advanced'' controllers included more detailed planning behavior when approaching a turn.
The specifications are also modular, in the sense that control of the steering wheel and control of the gears are given separate specifications, and combined into a single FRP program after synthesis.

\subsubsection{Counters}

The benchmarks of \cref{table:results2} consider a user interface that
allows to increment either one or two counters while at the same time
ensuring that each counter stays in range, as restricted by
predicate. The benchmarks are inspired by examples of the Reactive
Banana FRP library~\cite{reactivebanana}, which can be found under
\url{https://wiki.haskell.org/Reactive-banana/Examples}. The
benchmarks work on a small interface, over which the user can
increment or decrement one or two counters.In the two counter case,
he or she can switch between the counters using an additional toggling
input.

In the simple variant, the system only must ensure that each counter
stays in range, where it is allowed to ignore an increment or
decrement request from the user, if this could produce a counter value
out of range. In the more advanced variant, we also assume the
existence of a graphical interface, where the inputs are realized via
pressing different buttons. Here, the synthesizer can only avoid an
increment of the user by disabling a button before pressing it could
lead to an out of range update. For all benchmarks, the synthesizer
must utilize purity of the increment / decrement operation to test
early enough whether the environment could produce an out of range
value to prevent this either by ignoring the input or by disabling the
corresponding buttons.

\end{document}